\documentclass[lettersize,journal]{IEEEtran}
\usepackage{amssymb}
\usepackage{amsmath}
\usepackage{amsthm}
\usepackage{graphicx}

\usepackage[linesnumbered,ruled]{algorithm2e}
\usepackage{booktabs}
\usepackage{cite}
\usepackage{bm}
\usepackage{float}
\usepackage{multirow}
\usepackage{color}
\usepackage{subfigure}
\usepackage{caption}
\usepackage{epstopdf}
\usepackage{hyperref}
\usepackage{fixltx2e}
\usepackage{longtable}
\usepackage{diagbox}
\usepackage{changepage}
\usepackage{comment}
\usepackage{cases}
\usepackage[short]{newoptidef}
\usepackage{makecell}
\usepackage{mathabx}
\theoremstyle{definition}

\newtheorem{mechanism}{Mechanism}
\newtheorem{definition}{Definition}
\usepackage{algorithmic}
\theoremstyle{remark}

\theoremstyle{plain}
\newtheorem{theorem}{Theorem}
\newtheorem{lemma}{Lemma}

\newtheorem{assumption}{Assumption}

\renewcommand{\algorithmicrequire}{\textbf{Input:}}
\renewcommand{\algorithmicensure}{\textbf{Output:}}
\hyphenation{op-tical net-works semi-conduc-tor IEEE-Xplore}

\begin{document}

\title{Cached Model-as-a-Resource: Provisioning Large Language Model Agents for Edge Intelligence in Space-air-ground Integrated Networks}

\author{Minrui Xu, Dusit Niyato, \emph{Fellow, IEEE}, Hongliang Zhang, Jiawen Kang*, Zehui Xiong,\\ Shiwen Mao, \emph{Fellow, IEEE}, and Zhu Han, \emph{Fellow, IEEE}
\thanks{M.~Xu and D.~Niyato are with the School of Computer Science and Engineering, Nanyang Technological University, Singapore 639798, Singapore (e-mail: minrui001@e.ntu.edu.sg; dniyato@ntu.edu.sg). H. Zhang is with School of Electronics, Peking University, Beijing 100871, China (e-mail: hongliang.zhang92@gmail.com). J.~Kang is with the School of Automation, Guangdong University of Technology, China (e-mail: kavinkang@gdut.edu.cn). Z.~Xiong is with the Pillar of Information Systems Technology and Design, Singapore University of Technology and Design, Singapore 487372, Singapore (e-mail: zehui\_xiong@sutd.edu.sg). S.~Mao is with the Department of Electrical and Computer Engineering, Auburn University, Auburn, AL 36849-5201 USA (email: smao@ieee.org). Z. Han is with the Department of Electrical and Computer Engineering at the University of Houston, Houston, TX 77004 USA, and also with the Department of Computer Science and Engineering, Kyung Hee University, Seoul, South Korea, 446-701 (e-mail: hanzhu22@gmail.com). (\textit{*Corresponding author: Jiawen Kang})}
}

\maketitle
\begin{abstract}
Edge intelligence in space-air-ground integrated networks (SAGINs) can enable worldwide network coverage beyond geographical limitations for users to access ubiquitous and low-latency intelligence services. Facing global coverage and complex environments in SAGINs, edge intelligence can provision approximate large language models (LLMs) agents for users via edge servers at ground base stations (BSs) or cloud data centers relayed by satellites. As LLMs with billions of parameters are pre-trained on vast datasets, LLM agents have few-shot learning capabilities, e.g., chain-of-thought (CoT) prompting for complex tasks, which raises a new trade-off between resource consumption and performance in SAGINs.
In this paper, we propose a joint caching and inference framework for edge intelligence to provision sustainable and ubiquitous LLM agents in SAGINs.
We introduce ``cached model-as-a-resource" for offering LLMs with limited context windows and propose a novel optimization framework, i.e., joint model caching and inference, to utilize cached model resources for provisioning LLM agent services along with communication, computing, and storage resources. We design ``age of thought" (AoT) considering the CoT prompting of LLMs, and propose a least AoT cached model replacement algorithm for optimizing the provisioning cost. We propose a deep Q-network-based modified second-bid (DQMSB) auction to incentivize network operators, which can enhance allocation efficiency by 23\% while guaranteeing strategy-proofness and free from adverse selection.
\end{abstract}

\begin{IEEEkeywords}
Space-air-ground integrated networks (SAGINs), edge intelligence, large
language model (LLM) agents, auction theory, deep reinforcement learning (DRL)
\end{IEEEkeywords}

\section{Introduction}

Space-air-ground integrated networks (SAGINs) can extend network coverage for users to access real-time edge intelligence services, such as image recognition and data analysis, with global connectivity~\cite{liu2018space}. Beyond terrestrial communication systems, which are limited by network capacity and the coverage of ground-based stations (BSs), satellites can provide seamless connectivity for users in environmentally harsh areas such as oceans and mountains, by acting as relays during provisioning computing services from cloud data centers~\cite{tang2021computation}. For instance, implementing edge intelligence in SAGINs can enable various smart ocean activities~\cite{wei2021hybrid}, including real-time monitoring and analysis for offshore aquaculture and AI assistant services requested by onboard passengers, crew, and fishermen. The advance of large language models (LLMs)~\cite{min2023recent, xu2023sparks, xu2024large} dramatically enhance the capabilities of edge intelligence in SAGINs, allowing AI agents based on LLMs, i.e., LLM agents to tackle unseen and complex reasoning tasks with various data modalities~\cite{xi2023rise}. Moreover, edge intelligence in SAGINs can provision low-latency and privacy-preserving LLM agent services~\cite{xi2023rise} at edge servers of ground BSs or at cloud data centers relayed by satellites, which can act as autonomous life or work assistants for human users.

Based on LLMs with billions of parameters and pre-trained on Internet-scale datasets, LLMs agents can perform few-shot learning~\cite{brown2020language, yang2024llm}, including in-context learning (ICL) for unseen tasks chain-of-thought (CoT) prompting for complex reasoning tasks and role-playing under specific instructions. Although the training and inference of LLMs require enormous computing and data resources, users with low-end mobile devices in SAGINs can request LLM agent services from edge servers at ground BSs directly or from cloud data centers using satellites/ground BSs as relays. Therefore, provisioning LLM agent services in SAGINs with heterogeneous computing resources can reduce service latency and preserve user privacy~\cite{jiang2023approaching, hong2022protecting, xu2024unleashing}. Different from the traditional implementation of edge intelligence~\cite{zhou2019edge}, the resource-constraint edge servers at ground BSs cannot load every LLM to provision all LLM services simultaneously~\cite{xu2023sparks}. Meanwhile, the capabilities of LLMs to perform few-shot learning are limited by the size of context windows, which are determined by the model architecture of LLMs~\cite{packer2023memgpt}. During the provisioning of LLM agent services, inference operations accumulate tokens in context windows to elicit few-shot learning capabilities for better performance. After the accumulated inference tokens exceed the context windows~\cite{yang2023longqlora}, the output of LLMs might no longer be effective and the performance of LLM agents decreases significantly.

As the context windows of running LLMs can be depleted during provisioning LLM agent services, the cached models at edge servers of ground BSs should be considered as an unexplored type of resource analogous to conventional communication, computing, and storage resources~\cite{xu2023sparks}. In this regard, network operators aiming to minimize the provisioning cost of LLM agent services need to schedule the running LLMs at ground BSs based not only on conventional hardware constraints but also on the effective management of running LLMs. Similar to content caching, model caching is an optimization framework for implementing edge intelligence that aims to reduce service latency and resource consumption of network operators. Nevertheless, facilitating model caching for LLMs should consider their few-shot learning capabilities~\cite{brown2020language}, which affect resource consumption and inference performance at each inference step. Furthermore, the valuations of the opportunities to provision LLM agent services for network operators are positively correlated. However, as simple relays between users and cloud data centers, satellites have asymmetric information and cannot receive sufficient cost/performance feedback to measure their valuation compared to ground BSs, which might lead to adverse selection~\cite{arnosti2016adverse}, i.e., inefficient network operator allocation, for real-time mechanisms.

In this paper, we propose a joint model caching and inference framework for edge intelligence to provision sustainable and ubiquitous LLM agent services to users in SAGINs, where LLMs can be run at edge servers at ground BSs and cloud data centers via the relay of satellites/ground BSs. In this framework, ground BSs can leverage edge servers close to users for provisioning low-latency and privacy-preserving LLM agent services. Moreover, satellites can relay LLM agent service requests to cloud data centers for users in remote areas, e.g., mountains and oceans. To improve quality of services (QoS) for LLM agents running on ground BSs, we introduce a new optimization framework for serving LLMs in edge intelligence, i.e., joint model caching and inference. In this framework, cached LLMs can be regarded as a type of fundamental resource in provisioning LLM agent services. Leveraging the few-shot learning capability of LLMs, e.g., chain-of-thought prompting, we design a specific metric for the model caching algorithm, namely, age of thought (AoT), to measure the relevance and coherence of intermediate thoughts during CoT inference. Based on AoT, we propose the least AoT cache replacement algorithm, which determines the eviction of cached models with the least AoT, to reduce the total provisioning cost of network operators.
To eliminate adverse selection against satellites, we leverage modified second-bid (MSB) auctions~\cite{arnosti2016adverse} and deep reinforcement learning (DRL)~\cite{mnih2015human} to design a deep Q-network-based MSB (DQMSB) auction that can select the optimal price scaling factor in MSB to improve the marking efficiency in network operator allocation while guaranteeing the DQMSB is strategy-proof and free from adverse selection.

Our main contributions can be summarized as follows.
\begin{itemize}
    \item We formulate a novel optimization framework for edge intelligence, i.e., the joint model caching and inference framework, to provision sustainable and ubiquitous LLM agents for users with satellites and ground BSs in SAGINs.
\item In this framework, for the first time, we propose the concept of ``cached model-as-a-resource" to implement edge intelligence, where cached models are regarded as a type of resource similar to conventional communication, computing, and storage resources, at edge servers at ground BSs and cloud datacenters in SAGINs.
\item We formulate the LLM agent provisioning problem for ground BSs to minimize total system cost under resource and coverage constraints. To tackle this problem effectively, we design a novel least AoT model caching algorithm to schedule the loading and eviction of LLMs based on the metric of AoT, evaluating the relevance and coherence of intermediate thoughts in context windows.
\item To maximize the revenue of network operators in provisioning high-quality LLM agent services, we propose the DQMSB auction, which can guarantee free of  adverse selection and fully strategy-proof, by using DRL to select the optimal pricing scaling factor.
\end{itemize}

The remaining sections of this paper are organized as follows. In Section~\ref{sec:related}, we present a review of related work. In Section~\ref{sec:system}, we describe the system model for provisioning LLM agents in SAGINs. In Section~\ref{sec:problem}, we formulate the problem, propose the model caching algorithm, and design the market. In Section~\ref{sec:auction}, we propose the DQMSB auction. In Section~\ref{sec:results}, we present the simulation experiments. Finally, we conclude this paper in Section~\ref{sec:conclusions}.







\section{Related Works}\label{sec:related}

\subsection{Edge Intelligence in Space-air-ground Integrated Networks}
Provisioning AI services in SAGINs can significantly enhance the intelligent configuration and control of SAGINs to adapt to their environment, improving various performance metrics such as latency, energy usage, bandwidth, and real-time adaptability~\cite{liu2018space}. Xu \textit{et al.} in \cite{xu2023ai} introduce a cloud-edge aggregated artificial intelligence architecture that leverages the on-orbit lightweight 5G core and edge computing platform provided by the Tiansuan constellation. 
For mission-critical 6G services, Hou \textit{et al.} in \cite{hou2022edge} propose a three-layer architecture in SAGINs for ultra-reliable and low-latency edge intelligence that includes unikernel-based ultra-lightweight virtualization and microservice-based paradigms for prompt response and improved reliability. Considering the time-varying characteristics of content sources and the dynamic demands of users, Qin \textit{et al.} in \cite{qin2022content} propose a content service-oriented resource allocation algorithm that aims to achieve a stable matching based on users' preferences for SAGINs.


\subsection{Large Language Models for Edge Intelligence}

In literature, LLMs are an essential part of next-generation edge intelligence systems, which have been leveraged to design, analyze, and optimize edge intelligence~\cite{chen2023big, shen2024large}. For instance, Du \textit{et al.} in \cite{du2023power} investigate the potential of LLMs as a valuable tool for FPGA-based wireless system development. Furthermore, Cui et al. in \cite{cui2023llmind} introduce LLMind, an AI framework that integrates LLMs with domain-specific AI modules and IoT devices for executing complex tasks. In multi-agent systems for 6G communications, Jiang \textit{et al.} \cite{jiang2023large} demonstrated the effectiveness of LLMs in collaborative data retrieval, planning, and reflection through a semantic communication case study. Considering the issues that traditional deep offloading architectures are facing several issues, including heterogeneous constraints, partial perception, uncertain generalization, and lack of traceability, Dong \textit{et al.} \cite{dong2023lambo} propose an LLM-based offloading framework that utilizes LLMs for offloading decisions, addressing issues like heterogeneous constraints and uncertain generalization. Nevertheless, the execution of LLMs usually requires enormous computing resources, which is infeasible for edge environments. Therefore, considering efficient training and inference architecture in 6G networks, Lin \textit{et al.} in \cite{lin2023pushing} explore feasible techniques such as split learning/inference, parameter-efficient fine-tuning, quantization, and parameter-sharing inference for pushing LLMs to the edge.

\subsection{Auction Design for SAGINs}

Auctions are efficient and effective methods for real-time network resource allocation in SAGINs~\cite{zhang2021privacy,niyato2020auction, du2018auction}. 
In civil aircraft augmented SAGINs, Chen \textit{et al.} \cite{chen2020service} propose a truthful double auction for device-to-device (D2D) communications and a reverse auction mechanism for spectrum sharing. 
For lightweight blockchain-based SAGINs, Yang \textit{et al.} in \cite{yang2023lightweight} propose a secure sequential Vickrey auction mechanism to ensure secure and reliable spectrum sharing within the SAGINs.
However, existing auction designs for incentivizing network operators in SAGINs consider the information of satellites, ground BSs, and UAVs to be symmetric for evaluating their valuation of network resources. Nevertheless, the asymmetric information among network operators in SAGINs can {\color{black}cause} efficiency loss during resource allocation, especially when the values of network operators are positively correlated. Therefore, in this paper, we propose DQMSB that can mitigate adverse selection and increase market efficiency for SAGINs with DQN-based price scaling factor selection.


\section{System Model}\label{sec:system}

\begin{figure}[t]
    \centering
    \includegraphics[width=1\linewidth]{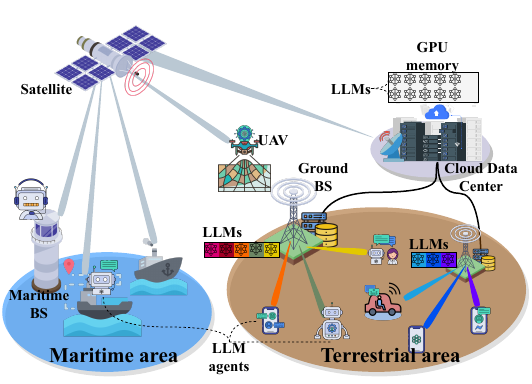}
    \caption{Joint caching and inference framework for provisioning large language model (LLM) agents in SAGINs.}
    \label{fig:system}
\end{figure}

For edge intelligence in SAGINs, each group of users would like to utilize LLM agents based on one or several LLMs. In the system, each group of users can use one LLM agent as their active assistant because their attention is limited depending on their preferences and current tasks. In SAGINs, a group of users needs to select network providers, including satellites and ground BSs, to access LLM agent services. 
As shown in Fig. \ref{fig:system}, the system consists of $N+1$ network operators, including one or several Low Earth Orbit (LEO) satellites in orbit and multiple ground BSs equipped with edge servers, all connected to the cloud data center via backhaul links. The set of network operators is represented by $\mathcal{N} = \{0, 1, \ldots, N\}$, where the LEO satellite is represented by $0$ and the set of BSs is represented by $\{1, \ldots, N\}$. The edge servers at ground BSs can execute LLM agent services for users while the rest of the services can be offloaded to cloud data centers with the relay of satellites or ground BSs. We use the set $\mathcal{I}=\{1,2,\ldots, I\}$ to denote the available LLM agent services based on the set of LLMs $\mathcal{M}=\{1, \ldots, M\}$. As LLMs are capable of performing multiple downstream tasks in LLM agent services simultaneously, it is considered that the number of LLM agent services is far greater than the number of LLMs~\cite{xi2023rise}, i.e., $I \gg M$. In the group of users $\mathcal{U}_n$ covered by network operator $n$, $R_{n}^t = \{R_{n,i,m}^t| i \in \mathcal{I}, m \in \mathcal{M}\}$ is used to represent the number of requests generated by LLM agent service $i$ to execute LLM $m$ for its specific functions, including planning, memory, tool-using, and embodied actions. Initially, the size of input data of LLM agent service $i$ can be denoted as $d_i$. Additionally, the configuration of LLM $m$ consists of the amount of runtime GPU memory, which is proportion to model size $s_m$, the computation required per token $e_m$, and the size of context window $w_m$. 

\begin{figure*}
    \centering
    \includegraphics[width=1\linewidth]{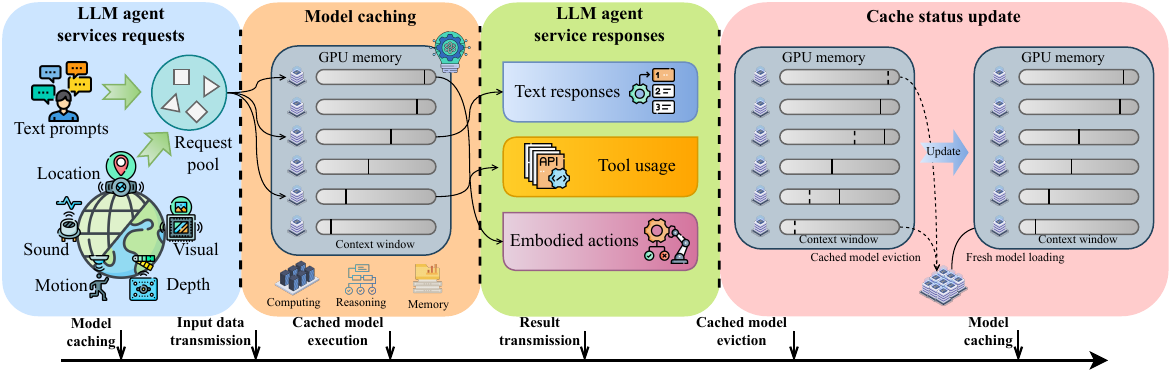}
    \caption{The workflow of the joint caching and inference framework for provisioning LLM agents with cached models.}
    \label{fig:workflow}
\end{figure*}

\subsection{Coverage Time Model}

The coverage time model for LEO satellite networks addresses the dynamic positioning of satellites concerning users. Unlike terrestrial networks, whose infrastructure remains stationary, LEO satellites exhibit constant motion, necessitating that users establish connections based on specific geometric metrics~\cite{tang2021computation}. These metrics include altitude $l$ of the LEO satellite orbit above the mobile user, the Earth's radius $E$, and the slant distance $s$ from users to the LEO satellites. The elevation angle $\theta^e$, delineating the line of sight between a mobile user and an LEO satellite, is determined by $\theta^e = \arccos{\left(\frac{E+l}{s}\right)}\cdot \sin\theta^g$,
where $\theta^g$ represents the geocentric angle covering the LEO satellite's service area, calculated as $\theta^g = \arccos{\left(\frac{E}{E + l}\right)}\cdot \cos\theta^e - \theta^e$.
Let $v^S$ denote the velocity of LEO satellite $0$. The maximal communication duration $T_0^S$ between a mobile user and the LEO satellite is given by
\begin{equation}
T_0^S = \frac{L}{v^S},
\end{equation}
where $L = 2 \theta^g (E + l)$ is the arc length over which communication with the LEO satellite is available for users.


\subsection{Communication Model}

To facilitate interaction with LLM agents, a group of users covered by network operator $n$, denoted by $\mathcal{U}_n$, can access services through network operator $n$ for data transmission. These users share the same spectral resources, resulting in mutual interference among them~\cite{wei2021hybrid}. The channel power gain from mobile user $u \in \mathcal{U}_n$ to LEO satellite $0$, accounting for large-scale fading and shadowed-Rician fading, is represented by $g_{u, 0}$~\cite{deng2020ultra}. Similarly, $g_{u, n}$ represents the channel power gain from mobile user $u$ to ground BS $n=1,\ldots, N$, incorporating large-scale fading and Rayleigh fading for terrestrial communications. The bandwidth allocated by satellite $0$ and ground BSs $n = 1, 2, \ldots, N$, is denoted as $B_0$ and $B_n$, respectively. Consequently, the uplink transmission rate for user $u \in \mathcal{U}_n$ to transmit input data of LLM agent services to network operators is given by
\begin{equation}
r_{u,n} = B_n \log_2 \left(1+\frac{g_{u, n}p_u}{\sum_{j\in \mathcal{U}_n\backslash \{u\}}g_{j,n} p_j + \sigma^2}\right),
\end{equation}
where $p_u$ is the transmit power of user $u$ and $\sigma^2$ is the power of the additive white Gaussian noise (AWGN). The satellite serves as an intermediary in providing LLM agent services between mobile users and cloud data centers via the satellite backbone network~\cite{cheng2019space}, with the transmission rate denoted by $r^C_0$. Moreover, ground BSs $n = 1, 2, \ldots, N$ connect to cloud data centers through the terrestrial core network with a fixed transmission rate $r^C_n$.

\subsection{Model Caching for LLM Agent Services}

To facilitate the provisioning of LLM agent services in SAGINs, we introduce a joint model caching and inference framework that enables edge servers located at ground BSs to cache LLMs and offload requests, optimizing the utilization of edge computing resources to provision LLM agent services for users, as shown in Fig.~\ref{fig:workflow}. Specifically, ground BSs $n = 1, \ldots, N$ are tasked with determining local caching and offloading strategies. Here, $a_{n, i, m}^t \in \{0,1\}$ represents the binary variable that indicates whether model $m$ for service $i$ is cached at ground BS $n$ during time slot $t$, and $b_{n, i, m}^t \in [0,1]$ signifies the continuous variable reflecting the proportion of model $m$ for service $i$ being executed at ground BS $n$ at time slot $t$. Let $\mathbf{a}^t_n = \{a^t_{n,1,1}, \ldots, a^t_{n,I,M}\}$ encapsulate the model caching decisions at ground BS $n$, with $\mathbf{a}^t = \{\mathbf{a}^t_1, \ldots, \mathbf{a}^t_N\}$ aggregating these decisions across all network operators. Furthermore, the request offloading decision for ground BS $n$ is denoted by $\mathbf{b}^t_n = \{b^t_{n,1,1}, \ldots, b^t_{n, I, M}\}$, while $\mathbf{b}^t = \{\mathbf{b}^t_1, \ldots, \mathbf{b}^t_N\}$ represents the collective offloading decisions of ground BSs $n=1,\ldots, N$. To extend the coverage of ground communication systems, satellite $0$ with limited computing and energy resources acts as a relay between users and cloud data centers, whose model caching decisions are $\mathbf{a}^t_0 = \mathbf{0}$ and request offloading decisions are $\mathbf{b}^t_0 = \mathbf{0}$, i.e., all the LLM agent services are offloaded to cloud data centers for remote executing relayed by satellite $0$.

For edge intelligence in SAGINs, LLM agent services requested by users can be executed at edge servers at ground BSs when the required LLMs are cached into GPUs. Let $G_n$ denote the GPU computing capacity in terms of GPU memory of ground BS $n$. Then, the decision of  model caching $\mathbf{a}^t_n$ should satisfy the following constraint at time slot $t$ for ground BS $n = 1, \dots, N$, as
\begin{equation}
    \sum_{i\in \mathcal{I}} \sum_{m \in \mathcal{M}} a_{n, i, m}^t s_{m} \leq G_n,
\end{equation}
where $s_{m}$ is the running size of model $m$. This indicates that the edge servers cannot load all the LLMs into GPUs as the computing resources at edge servers are constrained.
After the models are loaded into the GPUs of edge servers, LLM agent services can be executed at the ground BSs. Therefore, the constraint of LLM agent service provisioned at ground BS $n = 1, \dots, N$, is represented as
\begin{equation}
    b_{n, i, m}^t \mathbf{1}(R_{n,i,m}^t>0) \leq a_{n, i, m}^t, \forall i\in\mathcal{I}, m\in\mathcal{M},
\end{equation}
where $\mathbf{1}(\cdot)$ is the indicator function and $\mathbf{1}(R_{n,i,m}^t>0)$ indicates that there are requests of LLM agent service $i$ for model $m$ at BS $n$ at time slot $t$.
Finally, the total computing power consumption of edge servers is constrained by the total computing capacity of GPUs at ground BS $n = 1, \dots, N$, which can be represented as
\begin{equation}
     \sum_{i\in \mathcal{I}} \sum_{m \in \mathcal{M}} e_m a_{n, i, m}^t (1 - b_{n, i, m}^t) R_{n, i,m}^t \leq E_n.
\end{equation}
Nevertheless, in cloud data centers, it could be assumed that there is no GPU memory constraint or computing capacity constraint for executing LLMs.
\subsection{Chain-of-Thought Inference Model}

To improve the relevance and coherence of LLM agents, LLMs can leverage CoT prompting to perform step-by-step reasoning before obtaining the final response~\cite{zhang2023igniting}. {\color{black} 
As an advanced inference approach to elicit the emerging abilities of LLMs, CoT prompting allows LLMs to generate a sequence of intermediate reasoning steps towards problem-solving or concluding, instead of attempting to solve the entire problem in merely zero-shot manner.} During inference of LLMs, given any task description prompt $d$, LLM $m$ can generate an answer by recursively predicting the sequence of next tokens from the learned distribution $p_m$ conditioned on the concatenation of $d$ and of the tokens sampled so far. {\color{black}For all sequences of messages in LLM agent service $i$, $D_i= \{d_{i,0}, \ldots, d_{i,k}\}$ of at most $w_m$ tokens, the $p_m(D_i)$ follows the general product rule of probability~\cite{jiang2023latent}, i.e.,
\begin{equation}
\begin{aligned}
    &p_m(D_i)  = p_m(d_{i,0}, \ldots, d_{i,k})\\
    & = p_m(d_{i,0})p_m(d_{i,1}|d_{i,0})\ldots p_m(d_{i,k}|d_{i,0},\ldots,d_{i,k-1}),
\end{aligned}
\end{equation}
which is a good approximation of the true distribution $\hat{q}(D_i)$.}

For eacg CoT prompt in LLM agent service $i$, {\color{black}LLM $m$ is provided with $c_i$ varying length CoT examples $E_i = \{e_{i,0}, \ldots, e_{i,k}\}$ and each thought $e_{i,k}$ in $E_i$ is a sequence of $k_i$ tokens representing one reasoning step.} Those examples are designed to aid the LLMs in producing correct answers via CoT generation and thus for service $i$, $E_i$ are generated with true intentions $\theta^\star$ and true context $c^\star$. To sever LLM agent service $i$, for the given $E_i$ and a task $d_{i,0}$, LLMs then generate $(d_{i, 1}, \ldots, d_{i, k})$ messages. {\color{black}To evaluate the performance of the approximation of LLMs, we have the following definition.}

\begin{definition}[$\epsilon$-ambiguity~\cite{jiang2023latent}]
    For CoT examples $E_i$ of LLM service $i$ generated based on true context $c^\star$ and true intention $\theta^\star$, the ambiguity of the chain $\epsilon(E_i)$ is defined as the complement of the likelihood of the context $c^\star$ and intentions $\theta^\star$ conditioned on CoT examples $E_i$, i.e., 
    \begin{equation}
        \hat{q}(c^\star, \theta^\star|E_i) = 1 - \epsilon(E_i).
    \end{equation}
\end{definition}
{\color{black}In addition to the ambiguous definition of LLMs, we define the quality of contexts in training datasets as follows.}


\begin{definition}\label{df1}
To account for the potentially non-uniform distribution of contexts in training datasets at network operators, we introduce a skewness parameter $\gamma_n(c^\star)$ for each network operator $n$, which is defined as
\begin{equation}
    \gamma_n(c^\star) = \sup_{c\in C}\frac{\hat{q}(c^\star)}{\hat{q}(c_n)},
\end{equation}
where $c_n$ is the context owned by network operator $n$.
\end{definition}

Ground BSs can collect the context during their provisioning of LLM agent services, which satisfies the preference of their local users. Therefore, we have the following assumption.

\begin{assumption}\label{context}
    {\color{black}The prior distribution associated with true contexts $c^\star$ is uniform.}
\end{assumption}

Based on the uniform context considered in Assumption~\ref{context}, $\gamma_n(c^\star) = 1$ guarantees small values or provides CoT examples that should have small enough ambiguity, so the model with high certainty could guess the true context $c^\star$ from them \cite{tutunov2023can}. Following Definition \ref{df1}, we can estimate the difference of ambiguity measurements between the learned distribution $p_m$ and the true distribution $\hat{q}$, conditioned on input messages $D_i$ of service $i$, as follows.

\begin{theorem}\label{th1}
    Considering a collection of $c_i$ varying length CoT examples, which are generated from the intention $\theta^\star$ with the optimal context $c^\star$ sampled from $q_m(c)$ that satisfies Assumption~\ref{context}. Furthermore, let $d_{i,0}$ be the input message or task sampled from $q(\cdot | \theta^\star_0)$, which is generated from $\theta^\star_0$ sampled from $q_m(\cdot|c^\star)$. Then, for any sequence of messages $D_i$, we have
    \begin{equation}
        |p_m (D_i |d_{i,0}, E_i) - \hat{q}(D_i |d_{i,0},c^\star)|\leq \eta \prod_{y=1}^{c_i} \frac{\epsilon(E_{i,y})}{1-\epsilon(E_{i,y})},
    \label{eq:th1}
    \end{equation}
    where $\eta = 2 \frac{\epsilon(d_{i,0})}{1-\epsilon(d_{i,0})}$ depends on the ambiguity of the input.
\end{theorem}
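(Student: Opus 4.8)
The plan is to treat both the learned distribution $p_m(D_i\mid d_{i,0},E_i)$ and the target $\hat{q}(D_i\mid d_{i,0},c^\star)$ as marginalizations over the latent context $c$ and intention $\theta$, and then to show that their difference is governed entirely by how sharply the CoT examples $E_i$ concentrate the posterior over $(c,\theta)$ onto the true pair $(c^\star,\theta^\star)$. First I would write $p_m(D_i\mid d_{i,0},E_i) = \sum_{c,\theta} p_m(D_i\mid d_{i,0},c,\theta)\, p_m(c,\theta\mid d_{i,0},E_i)$ and, invoking the product-rule approximation $p_m\approx\hat{q}$ recorded above, replace the conditional generation term $p_m(D_i\mid d_{i,0},c,\theta)$ by $\hat{q}(D_i\mid d_{i,0},c,\theta)$ up to a controlled error. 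The target $\hat{q}(D_i\mid d_{i,0},c^\star)$ is the same marginal but with the posterior collapsed to the point mass at $(c^\star,\theta^\star)$, so the whole problem reduces to bounding $\bigl|\sum_{c,\theta} \hat{q}(D_i\mid d_{i,0},c,\theta)\,[\,p_m(c,\theta\mid d_{i,0},E_i) - \mathbf{1}\{(c,\theta)=(c^\star,\theta^\star)\}\,]\bigr|$.

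Next I would apply the triangle inequality and bound each generation factor $\hat{q}(D_i\mid d_{i,0},c,\theta)\le 1$, leaving the total posterior mass placed on the wrong latents, $\sum_{(c,\theta)\neq(c^\star,\theta^\star)} p_m(c,\theta\mid d_{i,0},E_i) = 1 - p_m(c^\star,\theta^\star\mid d_{i,0},E_i)$. Here the $\epsilon$-ambiguity definition enters: by construction $\hat{q}(c^\star,\theta^\star\mid E_i) = 1-\epsilon(E_i)$, so the posterior odds against the true pair is $\epsilon(E_i)/(1-\epsilon(E_i))$. Assumption~\ref{context} is what makes this clean — with a uniform prior on contexts, Definition~\ref{df1} gives $\gamma_n(c^\star)=1$, so the prior introduces no multiplicative skew and the posterior-odds bound is exactly the ambiguity ratio rather than $\gamma_n(c^\star)\,\epsilon/(1-\epsilon)$.

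To obtain the product $\prod_{y=1}^{c_i}\epsilon(E_{i,y})/(1-\epsilon(E_{i,y}))$ I would use the conditional independence of the $c_i$ CoT examples given $(c,\theta)$: conditioning sequentially on $E_{i,1},\dots,E_{i,c_i}$ multiplies the likelihood ratios, so the posterior odds against $(c^\star,\theta^\star)$ accumulate multiplicatively and the $c_i$ ambiguity ratios compound into the stated product. Finally, the input message $d_{i,0}$ is handled as an extra conditioning event exactly like an example, contributing its own ambiguity ratio $\epsilon(d_{i,0})/(1-\epsilon(d_{i,0}))$; the factor $\eta = 2\,\epsilon(d_{i,0})/(1-\epsilon(d_{i,0}))$ then arises because both directions of the two-sided approximation error (the substitution $p_m\to\hat{q}$ and the posterior mismatch) must be absorbed, yielding the factor of $2$.

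The main obstacle I anticipate is controlling the two approximations simultaneously and showing they combine multiplicatively rather than additively: the substitution $p_m(\cdot\mid c,\theta)\to\hat{q}(\cdot\mid c,\theta)$ and the posterior-odds accumulation must be interleaved so that the per-example errors compound into a single product, which requires the conditional-independence structure to hold not just for the true latent but uniformly across the mixture. Verifying that the sequential Bayesian update indeed yields the clean product $\prod_{y}\epsilon(E_{i,y})/(1-\epsilon(E_{i,y}))$ — and that the uniform-prior assumption is exactly what removes the residual $\gamma_n$ factors at every step of the recursion — is the most delicate part; the remaining estimates are routine applications of the triangle inequality and the $\le 1$ bound on probabilities.
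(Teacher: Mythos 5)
Your plan is, at its core, the same argument the paper gives (which is itself a sketch of the cited reference): the paper's manipulation of $p_m(D_i\mid d_{i,0},E_i)$ into $\frac{\hat{q}(D_i\mid d_{i,0},c^\star)+\Lambda}{1+\Upsilon}$, where $\Lambda,\Upsilon$ collect the mass on contexts $c\neq c^\star$, is algebraically the posterior-mixture decomposition you propose, just written in ratio form rather than Bayesian form; the paper's asserted bound $\Lambda,\Upsilon\leq \gamma_n^{c_i}(c^\star)\frac{\epsilon(d_{i,0})}{1-\epsilon(d_{i,0})}\prod_{y}\frac{\epsilon(E_{i,y})}{1-\epsilon(E_{i,y})}$ is exactly your ``odds against the true latent multiply under conditional independence, and the uniform prior (Assumption~\ref{context}, $\gamma_n(c^\star)=1$) removes the skew'' step. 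Indeed, your plan spells out (via sum-of-products $\leq$ product-of-sums across the $c_i$ conditionally independent examples) precisely the one step the paper outsources to its citation, which is a point in your favor.

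There is, however, one concrete flaw in your bookkeeping of the constant: two compensating errors that happen to land on the correct $\eta$. First, after applying the triangle inequality and $\hat{q}(D_i\mid d_{i,0},c,\theta)\leq 1$, you are \emph{not} left with only the wrong-latent mass $1-p_m(c^\star,\theta^\star\mid d_{i,0},E_i)$; you are left with the wrong-latent mass \emph{plus} the deficit $|p_m(c^\star,\theta^\star\mid d_{i,0},E_i)-1|$ at the true latent, i.e.\ $2\bigl(1-p_m(c^\star,\theta^\star\mid d_{i,0},E_i)\bigr)$ --- this is where the factor $2$ actually comes from (in the paper it is the step $|\Lambda-\Upsilon\hat{q}(\cdot)|\leq\Lambda+\Upsilon$ with both terms bounded by the same product). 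Second, your stated source of the $2$ --- one factor budgeted for the substitution $p_m(\cdot\mid c,\theta)\to\hat{q}(\cdot\mid c,\theta)$ --- is not available and not needed: the setup treats the product-rule identity as exact (the paper's proof opens with the equality $p_m(D_i\mid d_{i,0},E_i)=\hat{q}(D_i,E_i)/\hat{q}(d_{i,0},E_i)$), so there is no quantitative substitution error to absorb; conversely, if you did insist on a nonzero substitution error, it would enter additively with no bound from the ambiguity machinery, and your inequality would not close. Fix the accounting --- drop the substitution term, and keep both triangle-inequality terms --- and your argument becomes a correct rendering of the paper's proof.
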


\begin{proof}
    We provide a simplified proof to show how to obtain this bound with the necessary steps and the complete proof can be found in\cite{tutunov2023can}. Starting from $p_m (D_i |d_{i,0}, E_i)$, we have
    \begin{equation}
        \begin{aligned}
            p_m (D_i |d_{i,0},& E_i)  = \frac{\hat{q} (D_i, E_i)}{\hat{q} (d_{i,0}, E_i)} \\
            & = \frac{\hat{q} (D_i, E_i, c^\star) + \sum_{c\neq c^\star}\hat{q} (D_i, E_i, c)}{\hat{q} (d_{i,0}, E_i, c^\star) + \sum_{c\neq c^\star}\hat{q} (d_{i,0}, E_i, c)}\\
            & = \frac{\hat{q} (D_i\backslash \{d_{i,0}\}, E_i, c^\star) + \frac{\sum_{c\neq c^\star}\hat{q} (D_i, E_i, c)}{\hat{q} (d_{i,0}, E_i, c^\star)}}{1 + \frac{\sum_{c\neq c^\star}\hat{q} (d_{i,0}, E_i, c)}{\hat{q} (d_{i,0}, E_i, c^\star)}}\\
            & = \frac{\hat{q} (D_i\backslash \{d_{i,0}\}, E_i, c^\star) + \Lambda}{1 + \Upsilon},
        \end{aligned}
    \end{equation}
where $\Lambda$ and $\Upsilon$ are given by
\begin{equation}
    \Lambda = \frac{\sum_{c\neq c^\star}\hat{q} (D_i, E_i, c)}{\hat{q} (d_{i,0}, E_i, c^\star)} \text{and }
    \Upsilon  = \frac{\sum_{c\neq c^\star}\hat{q} (d_{i,0}, E_i, c)}{\hat{q} (d_{i,0}, E_i, c^\star)}.
\end{equation}
By leveraging the definition of ambiguity measure for CoT example $E_{i,y}$, we can establish the following bounds on $\Lambda$ and $\Upsilon $ as
\begin{equation}
    \Lambda, \Upsilon  \leq \frac{\gamma^{c_i}_n(c^\star)\epsilon(d_{i,0})}{1 - \epsilon(d_{i,0})} \prod_{y=1}^{c_i}\frac{\epsilon(E_{i,y})}{1-\epsilon(E_{i,y})}.
\end{equation}
Finally, combining the above components, we have
\begin{equation}
    \begin{aligned}
        |p_m (D_i |d_{i,0}, E_i) &- \hat{q}(D_i |d_{i,0},c^\star)| \\
        & =  \frac{|\Lambda - \Upsilon  \hat{q}(D_i/ \{d_{i,0}\}|d_{i,0},c^\star) |}{1 + \Upsilon }\\
        & \leq |\Lambda + \Upsilon  \hat{q}(D_i/ \{d_{i,0}\}|d_{i,0},c^\star)|\\
        & \leq \Lambda + \Upsilon \\
        & \leq \eta \prod_{y=1}^{c_i}\frac{\epsilon(E_{i,y})}{1-\epsilon(E_{i,y})},
    \end{aligned}
\end{equation}
where $\eta = 2\frac{\gamma_n^{c_i}(c^\star)\epsilon(d_{i,0})}{1-\epsilon(d_{i,0})}$, following Assumption~\ref{context} and indicating that $\gamma_n(c^\star)=1$ and thus $\eta = 2 \frac{\epsilon(d_{i,0})}{1-\epsilon(d_{i,0})}$.
\end{proof}

Theorem \ref{th1} indicates that the LLM prompted with CoT example $E_i$ is capable of approximating the true natural language distribution equipped with true context and intentions.

\begin{assumption}\label{ass2}
    The CoT example $E_i$ generated from $\theta^\star$ with a context $c^\star\sim q(c)$ is bounded by the ambiguity measure, i.e.,
    \begin{equation}
        \epsilon(E_i) = \hat{q}(c^\star, \theta^\star|E_i) \leq \sigma,
    \end{equation}
    where $\sigma \in [0, \frac{1}{2}]$.
\end{assumption}
Assumption \ref{ass2} implies that when carefully selected, CoT examples $E_i$ and the true context $c^\star$ can be recovered from $E_i$ with reasonably high certainty, i.e., the probability the $c^\star$ is behind $E_i$ is strictly greater than on a half. Given such CoT examples, we can transform the bound in Theorem~\ref{th1} into a geometrical convergence rate with the number of examples growing large as
\begin{equation}\label{eq:convergence}
        |p_m(D_i |d_{i,0}, E_i) - \hat{q}(D_i|d_{i,0},c^\star)|\leq \eta \beta^{c_i},
\end{equation}
where the CoT gain is $\beta = \frac{\sigma}{1-\sigma}\in [0,1)$.
Those examples described in Assumption~\ref{ass2} should be carefully selected to guarantee low ambiguity requirements. {\color{black}In practice, however, it can be challenging to collect such chain-of-thought examples, as there can be an assumption that allows us to measure ambiguity for a given sequence of thoughts as below.}
\begin{assumption}\label{eq:ass3}
    For the CoT examples $E_i$ generated from true intentions $\theta^\star$ with the true context context $c^\star \sim q(c)$, the associated ambiguity measure $\epsilon(E_i)$ vanishes as the length of sequence grows large as
    \begin{equation}
        \lim_{l\rightarrow \infty} \epsilon(E_i) = 0.
    \end{equation}
\end{assumption}
Assumption \ref{eq:ass3} implies that uncertainty over true context $c^\star$ and true intentions $\theta^\star$ for a sequence of thoughts is diminishing when more of these thoughts are collected. Therefore, for long enough CoT examples, the asymptotic requirement is sufficient to guarantee a low ambiguity measure. Satisfying Assumption \ref{eq:ass3}, we have the following lemma.

\begin{lemma}\label{lemma1}
    Considering CoT examples $E_i$ for any fixed $\sigma \in [0, \frac{1}{2})$ there is a length threshold $k^\star_{i,\sigma} \in \mathbb{N}$. For any $k_i \geq k^\star_{i,\sigma}$, we have
    \begin{equation}
        \epsilon(E_i) \leq \sigma.
    \end{equation}
\end{lemma}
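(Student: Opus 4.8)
The plan is to obtain the statement as a direct consequence of the vanishing-ambiguity condition of Assumption~\ref{eq:ass3} by unwinding the $\varepsilon$--$N$ definition of a limit, with the threshold $k^\star_{i,\sigma}$ read off as the rank beyond which the sequence stays below the prescribed level $\sigma$. Recall that Assumption~\ref{eq:ass3} asserts $\lim_{l\rightarrow\infty}\epsilon(E_i)=0$, i.e., the ambiguity measure, viewed as a function of the chain length $l=k_i$, converges to zero as more reasoning steps are collected. The first thing I would make explicit is that the threshold in the conclusion is taken with respect to exactly this length variable.

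I would then fix an arbitrary $\sigma\in(0,\tfrac{1}{2})$ and instantiate the definition of convergence with tolerance $\delta=\sigma>0$. By Assumption~\ref{eq:ass3} there exists $k^\star_{i,\sigma}\in\mathbb{N}$ such that $|\epsilon(E_i)-0|<\sigma$ whenever $k_i\geq k^\star_{i,\sigma}$. Since $\epsilon(E_i)$ is the complement of the conditional likelihood $\hat{q}(c^\star,\theta^\star|E_i)$ in the $\epsilon$-ambiguity definition, it is nonnegative, so $|\epsilon(E_i)|=\epsilon(E_i)$ and hence $\epsilon(E_i)<\sigma$, which in particular yields the claimed bound $\epsilon(E_i)\leq\sigma$ for all $k_i\geq k^\star_{i,\sigma}$. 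This is essentially the whole argument, and it is what makes Assumption~\ref{ass2} (and therefore the geometric rate in \eqref{eq:convergence}) attainable for sufficiently long chains.

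There is no substantive obstacle here: the content of the lemma is merely the \emph{existence} of a finite length threshold, which the limit definition supplies once nonnegativity of $\epsilon$ is noted. The only point requiring care is the endpoint $\sigma=0$ admitted by the half-open interval, where $\epsilon(E_i)\leq 0$ together with nonnegativity would force $\epsilon(E_i)=0$ exactly; this cannot be guaranteed at a finite length from convergence alone. I would therefore treat $\sigma=0$ as the degenerate asymptotic case, which coincides with Assumption~\ref{eq:ass3} itself, or alternatively invoke a monotone decay of $\epsilon(E_i)$ in the chain length if a genuinely finite threshold is desired at that endpoint.
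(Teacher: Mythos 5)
Your proposal is correct and follows essentially the same route as the paper's own proof: both simply unwind the $\varepsilon$--$N$ definition of the limit in Assumption~\ref{eq:ass3} to extract the threshold $k^\star_{i,\sigma}$. In fact you are more careful than the paper, which neither invokes nonnegativity of $\epsilon$ to drop the absolute value nor addresses the degenerate endpoint $\sigma=0$ that the half-open interval $[0,\tfrac{1}{2})$ formally admits.
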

\begin{proof}
    By selecting $\sigma\in [0, \frac{1}{2})$, CoT examples $E_i$ following Assumption \ref{eq:ass3} have the approximation that $\lim_{l\rightarrow \infty} \epsilon(E_i) = 0$. Then, there exists $k_{i,\sigma}^\star\in \mathbb{N}$ such that for any $k_i\geq k^\star_{i,\sigma}$, the inequality $\epsilon(E_i) \leq \sigma$ holds.
\end{proof}

Based on Lemma \ref{lemma1}, the geometrical convergence rate in Eq.~(\ref{eq:convergence}) can be established when the LLM is prompted with CoT examples $E_i$ of sufficient length, i.e., $k_i \geq k^\star_{i,\sigma}$ following Assumption \ref{eq:ass3}. In contrast to the low ambiguity requirement, CoT examples with low asymptotic ambiguity can be more attainable. Therefore, during the step-by-step inference of LLMs, the original CoT examples $E_i$ satisfying Assumption~\ref{eq:ass3} can be split or divided into different lengths and sizes of thoughts $E_i'$ following the required threshold $\sigma \in [0, \frac{1}{2})$, which can be more refined reasoning steps with predefined ambiguity.

LLMs, such as GPT-3, can perform CoT prompting which indicates that they can learn from past CoT examples for complex tasks presented to them. 
The intermediate thoughts can be used to enhance the performance of LLM agents, as LLMs can use meta-gradient learning during interaction to fit them~\cite {dai2023can}. However, depending on the relevance and coherence of intermediate thoughts, few-shot learning may have favorable or unfavorable impacts on the model performance. Based on the caching decision $a_{n, i, m}^t$ and offloading decision $b_{n, i, m}$, the batch of requests executed as ground BS $n$ can be calculated as $\delta_{n, i, m}^t= a_{n, i, m}^t (1-b_{n, i, m}^t) R_{n, i,m}^t k_i$ for service $i$ and model $m$, where $k_i$ is the size of CoT examples for service $i$ which can be estimated via Lemma \ref{lemma1}. In general, the number of thoughts increases monotonically when the LLM is cached into the edge servers, which can be represented as
\begin{equation}\label{eq:contexttoken}
	K^t_{n,i,m} = \begin{cases}
		0,& t = 0, \\ 
		a_{n, i, m}^t(K^{t-1}_{n,i,m} + \delta_{n, i, m}^t), &\text{otherwise}.
	\end{cases}
\end{equation}
Similar to the definition of age of information (AoI), the AoT measures the freshness of intermediate thoughts within the cached LLMs for the current inference requests.  With a vanishing factor $\Delta_{i, m}^t$ of thoughts, the AoT is adjusted by the non-increasing age utility function, which is represented as
\begin{equation}\label{eq:contextvalue}
   \kappa^t_{n,i,m} = \begin{cases}
 0,& t = 0, \\ 
 a_{n, i, m}^t \{ \kappa^{t-1}_{n,i,m} +  \delta_{n, i, m}^t - \Delta_{i,m}^t\}^+, &\text{otherwise}.
\end{cases}
\end{equation}
According to the AoT, the weighted total of the number of examples in demonstrations may be used to determine the number of examples in context.

Based on Eq.~(\ref{eq:convergence}), the few-shot CoT reasoning performance $A_{i,m}$ of model $m$ in service $i$ can be defined as
\begin{equation}\label{eq:acc}
    A_{n, i,m}^t = \alpha_{i,m}\log(1/\beta_{i, m}^{\kappa^t_{n,i,m}}), 
\end{equation}
where $\alpha_{i,m}$ is the zero-shot accuracy of LLM $m$ for service $i$ and $\log(1/\beta_{i, m}^{\kappa^t_{n,i,m}})$ is the performance gain of generated CoT examples of LLM $m$ for service $i$. 

\section{Problem Formulation, Caching Algorithm Design, and Market Design}\label{sec:problem}

In this section, we first formulate the problem of provisioning LLM agents in SAGINs to maximize the quantity and quality of LLM agent services. To solve the formulated problem, we next propose a model caching algorithm for local cached model management for ground BSs. Furthermore, we design an LLM agent market for incentivizing network operators, i.e., satellites and ground BSs, to contribute their resources to execute LLM agents. 

During the provisioning of LLM agent services, network operators, i.e., satellite or terrestrial BSs, need to provide communication, computation, storage, and cached model resources to run the LLM agent services. In detail, the user's request and returned results need to be transmitted over the wireless channel, which costs bandwidth to complete the interaction between the user and the LLM agent deployed at the edge. The reasoning process of LLM consumes a lot of computing resources, especially when LLMs perform CoT reasoning; they need a large amount of intermediate computation to obtain the final high-quality results. Since LLM uses CoT reasoning to improve the quality of model inference, they need substantial storage resources to store past CoT examples locally for demonstration. Finally, since the context window of LLM is limited, when the model inference is deployed beyond a certain number of tokens, the model needs to be refreshed, i.e., to evict the old model and load another fresh model. 

\subsection{Cost Structure}

As mentioned above, the LLM agent services provisioned at ground BSs can be handled by edge servers or offloaded to cloud data centers via the core network. Based on the decisions of model caching and request offloading, the overall cost of providing LLM agent services, including the cost at the edge and the cost at the cloud, can be expressed as follows.

\subsubsection{Edge Inference Cost}
{\color{black}Specifically, the cost of edge inference includes the cost of switching in GPUs, the cost of transmitting data across edges, the cost of performing computations on edges, and the cost of model accuracy.
Based on decisions about model caching, each edge server must load models into the GPU memory before execution.} During the model loading process, there is a cost associated with switching between models in GPUs, which includes the latency of loading the model and the cost of wear and tear on the hardware~\cite{zhao2022edgeadaptor}. Therefore, the switching cost $l^s_n$ of ground BS $n$ to load and evict models can be calculated as
\begin{equation}
    l^{switch}_n(\mathbf{a}^t) = \sum_{i\in \mathcal{I}} \sum_{m\in\mathcal{M}} \lambda \mathbf{1}(a_{n,i,m}^{t} > a_{n,i,m}^{t-1}),
\end{equation}
where $\lambda$ denotes the coefficient for loading and evicting the model and $\mathbf{1}(\cdot)$ is the indicator function. When $a_{n,i,m}^{t} > a_{n,i,m}^{t-1}$, i.e., $a_{n,i,m}^{t}=1$ and $a_{n,i,m}^{t-1} = 0$, $\mathbf{1}(a_{n,i,m}^{t} > a_{n,i,m}^{t-1})$ indicates that the loading of an uncached model. Otherwise, there is no switching cost incurred at edge servers. 

When the requested models are cached into the GPU memory of edge servers, users communicate with the edge servers to request LLM agent services. Let $l_n^{trans}$ denote the transmission cost of input prompts and inference results. The transmission cost of ground BS $n$ can be calculated as
\begin{equation}
    l_n^{trans}(\mathbf{a}^t, \mathbf{b}^t) = \sum_{i\in\mathcal{I}} 
    \sum_{m\in\mathcal{M}} R_{n,i,m}^t\left(l_{n, i} +  \frac{d_i}{r_n^C} b_{n,i,m}^{t}\right),
\end{equation}
where $ l_{n, i} = d_i / \mathbb{E}_{u\in\mathcal{U}_n}[r_{u,n}]$ is the unit transmission cost per input and result for service $i$ to transmit the input data with size $d_i$ from users $\mathcal{U}_n$ to ground BS $n$.

Let $f_n$ denote the computing capacity of ground BS $n$. The execution of LLM agent services at ground BSs incurs inference latency, which is denoted as $l^{comp}_n$ for ground BS $n$. The edge computing cost can be calculated as
\begin{equation}
    l^{comp}_n(\mathbf{a}^t, \mathbf{b}^t) = \sum_{i\in\mathcal{I}} 
    \sum_{m\in\mathcal{M}} \delta_{n, i, m}^t \frac{e_m}{f_n},
\end{equation}
where $\delta_{n, i, m}^t= a_{n, i, m}^t (1-b_{n, i, m}^t) R_{n, i,m}^t k_i$ is the total computation token for a batch of LLM agent service $i$.
Finally, as ground BSs might not have sufficient resources for executing the best-matched model requested by LLM agents, which might introduce a performance gap, the requests processed by other LLMs with the equivalent function incur accuracy cost $l^{acc}_n$, which can be represented as
\begin{equation}
    l^{acc}_n(\mathbf{a}^t, \mathbf{b}^t) = \sum_{i\in\mathcal{I}} 
    \sum_{m\in\mathcal{M}} \bar{A}_{n, i,m}^t  R_{n,i,m}^t a_{n,i,m}^{t} (1 - b_{n,i,m}^{t}),
\end{equation}
where $\bar{A}_{n, i,m}^t = \frac{1-\alpha_{i,m}}{\kappa^t_{n,i,m} \log(1/\beta_{i, m})}$ is the unit accuracy cost following Eq.~(\ref{eq:acc}).
By sacrificing some accuracy of LLM agent services, the system can reduce the model missing rate.
Therefore, the total edge inference cost of ground BS $n$ is
\begin{equation}
\begin{aligned}
    L^t_n(\mathbf{a}^t, \mathbf{b}^t) = &l^{switch}_n(\mathbf{a}^t_n) + l_n^{trans}(\mathbf{a}^t_n, \mathbf{b}^t_n) \\&+ l^{comp}_n(\mathbf{a}^t_n, \mathbf{b}^t_n) + l^{acc}_n(\mathbf{a}^t_n, \mathbf{b}^t_n).
\end{aligned}
\end{equation}
The edge inference cost is jointly determined by the caching decisions and offloading decisions of ground BSs. Nevertheless, the missed or offloaded requests are executed by cloud data centers.

\subsubsection{Cloud Inference Cost}

The ground BSs are typically limited in resources and cannot serve all LLM agent service requests. There are two main reasons for this limitation. First, the ground BSs’ computing capacities may be insufficient to load many LLMs into the GPU memory. Second, the ground BSs’ energy capacity may not be enough to handle all requests. Therefore, some requests will be offloaded to cloud data centers for remote execution.

When the requested models are not available or the ground BS lacks sufficient resources, these user requests are transmitted to the cloud data center, which then allocates resources to serve them. According to~\cite{zhao2022edgeadaptor}, cloud data centers can provide serverless LLM agent services, charging users on a ``pay-as-you-go" basis. This means that users pay based on the number of requests rather than the specific resources occupied.
However, this cloud-based inference introduces additional latency due to data transmission in the core network, which is larger than the latency at ground BSs. Additionally, the accuracy cost of offloaded inference requests executed by the cloud data center is expected to be minimal, as the requests can be processed using the most accurate model with common CoT examples owned by the data center.

Based on the above analysis, we use $l_{n}^C$ to represent the total cost of offloading requests to the cloud data center for remote execution. Then, the total cloud computing cost at time slot $t$ is
\begin{equation}
    L^t_C(\mathbf{a}^t, \mathbf{b}^t) = \sum_{n\in\mathcal{N}} \sum_{i\in\mathcal{I}} \sum_{m\in\mathcal{M}} l_{0,m} b_{n,i,m}^t R_{n,i,m}^t,
\end{equation}
where $l_{0,m}$ is the unit process cost of model $m$ at cloud data centers.
Therefore, the total cost $L^{total}_n$ for provisioning LLM agent services at network operator $n$ can be calculated as
\begin{equation}
    L^{total}_n = \frac{1}{T}\sum_{t\in\mathcal{T}}\left(L^t_C + L_n^t\right).
\end{equation}

\subsection{Problem Formulation of Ground Base Stations}

To improve the efficiency of mobile edge intelligence, we take into account both the cost of edge inference and cloud inference. This includes considering the switching cost, accuracy cost, transmission cost, and inference cost over a specific period $T$. For ground BSs $n = 1, \ldots, N$, the problem of providing LLM agent services is formulated as
\begin{mini!}|s|[2]<b>
    {\mathbf{a}^t_n, \mathbf{b}^t_n}{L^{total}_n\label{eq:obj}}{}{}
    \addConstraint{(3), (4), (5)}{\label{eq:con1}}
    \addConstraint{K_{n,i,m}^t\leq w_m, \forall i \in \mathcal{I}, \forall m\in\mathcal{M}}{\label{eq:con11}}
    \addConstraint{a_{n,i,m}^t\in}{\{0,1\}\label{eq:con2}}
    \addConstraint{b_{n,i,m}^t\in}{[0,1]\label{eq:con3}}.
\end{mini!}
Constraint (\ref{eq:con11}) indicates that the context tokens cannot exceed the size of context windows. 
To address the optimization problem described above, we need to overcome the challenge of time-coupled elements, such as GPU memory and CoT examples, as it takes into account both future request dynamics and historical CoT examples. Furthermore, the problem is a mixed-integer programming problem, which is known to be NP-hard. To solve the problem efficiently, we require low-complexity algorithms to determine decisions of model caching and request offloading.

\subsection{The Least Age-of-Thought Caching Algoirhtm}

To effectively serve LLMs for provisioning LLM agent services, we propose the least AoT algorithm based on the proposed AoT metric. When additional GPU memory is required for loading an uncached requested LLM, the least AoT algorithm counts the value of CoT examples, calculates them, and removes the cached LLM with the lowest AoT. Therefore, at each time slot $t$, the model caching decisions can be obtained by solving the maximization problem of the number of CoT examples for the cached models, which can be represented as
\begin{maxi!}|s|[2]<b>
    {\mathbf{a}^t}{\sum_{i\in\mathcal{I}}\sum_{m\in\mathcal{M}} \kappa^t_{n,i,m} \label{eq:tobj}}{}{}
    \addConstraint{\sum_{i\in \mathcal{I}} \sum_{m \in \mathcal{M}} a_{n,i,m}^t s_{m} \leq G_n, \forall n\in \mathcal{N}}{\label{eq:tcon1}}
    \addConstraint{\sum_{i\in \mathcal{I}} \sum_{m \in \mathcal{M}} (1-b_{n,i,m}^t) e_{m} \leq E_n, \forall n\in \mathcal{N}}{\label{eq:tcon1}}
    \addConstraint{a_{n,i,m}^t\in}{ \{0,1\}\label{eq:tcon2}}.
\end{maxi!}
The available capacity of GPU memory $G_n^t$ of ground BS $n = 1, \ldots, N$ at time slot $t$ can be calculated as $G_n^t = G_n-\sum_{i\in \mathcal{I}} \sum_{m \in \mathcal{M}} a_{n,i,m}^t s_m$. 
This algorithm allows the least important LLM to have a higher chance for eviction in the current inference task. The complexity of the algorithm increases linearly as the number of models increases. Therefore, it works well with a large number of LLMs on ground BSs with limited GPU memory. Using more intermediate reasoning steps during inference makes the LLMs perform more accurately. Based on caching decisions $\mathbf{a}^t_n$ by solving the optimization problem in (\ref{eq:tobj}), offloading decisions $\mathbf{b}^t_n$ are obtained by solving the optimization problem in (\ref{eq:obj}). The detailed implementation of the Least AoT algorithm is provided in Algorithm~\ref{alg1}.

\begin{algorithm}[t]
		\caption{The Least Age-of-Thought Cached Model Replacement Algorithm}
		\label{alg1}
		\small
                \algorithmicrequire{ Model caching status $\mathbf{a}^{t-1}_n$, model context status $\kappa_n^t$, and LLM agent service requests $R^{t}_n$, GPU memory capacity $G_n$, GPU computing capacity $E_n$.}\\
                \algorithmicensure{ Model caching decision $\mathbf{a}^{t}_n$ and request offloading decision $\mathbf{b}^{t}_n$.}\\
                Initialize $\mathbf{a}^{t}_n = \mathbf{0}$, $\mathbf{b}^{t}_n = \mathbf{0}$, $G_n^t = G_n - \sum_{i\in\mathcal{I}}\sum_{m\in\mathcal{M}}a_{n,i,m}^{t-1}s_m$, and $E_n^t = 0$.\\
                \For{$R_{n,i,m}^t > 0 $ in $R_{n}^t$}{
                \If{$a_{n,i,m}^{t-1} = 1 $}{
                    $a_{n,i,m}^{t} \leftarrow 1$;\\
                }
                \ElseIf{ $G_n^t + s_m \leq G_n$}{
                    $a_{n,i,m}^{t} \leftarrow 1$;\\
                    $G_n^t \leftarrow G_n^t + s_m$;\\
                }
                \Else{ 
                \While{$G_n^t + s_m \geq G_n$}{
                $(\bar{i},\bar{m}) \rightarrow \arg\min_{(\bar{i},\bar{m})}\{\kappa_{n,\bar{i},\bar{m}}^t \in \kappa_n^t\}$;\\
                \If{$a_{n,\bar{i},\bar{m}}^{t-1} = 1$ and $a_{n,\bar{i},\bar{m}}^{t} = 0$}{
                $G_n^t = G_n^t - s_{\bar{m}}$;\\
                }
                }
                \If{$G_n^t + s_m \leq G_n$}{
                    $a_{n,i,m}^{t} \leftarrow 1$;\\
                    $G_n^t \leftarrow G_n^t + s_m$;\\
                }
                }
                \If{$a_{n,i,m}^{t}=1$ and $E_n^t + e_m a_{n,i,m}^{t}R_{n,i,m}^t \leq E_n$}{
                $b_{n,i,m}^{t} \leftarrow 1$;\\
                $E_n^t \leftarrow E_n^t + e_m a_{n,i,m}^{t}b_{n,i,m}^{t}R_{n,i,m}^t$;\\
                Update context status $\kappa_{n,i,m}^t$ following Eq.~(\ref{eq:contextvalue});\\
                }
                \If{$K_{n,i,m}^t > w_m$}{
                $a_{n,i,m}^{t} \leftarrow 0$;\\
                $b_{n,i,m}^{t} \leftarrow 0$.
                }
		}
  
\end{algorithm}

\subsection{Market Design}

To motivate LLM agent providers to construct and update LLM agents for users, we design an LLM agent market where sellers (LLM agent providers) can earn profits from provisioning LLM agent services, and bidders (network operators) are competing for provisioning LLM agent services to their users. We consider that the network operators are risk-neutral bidders in the market whose surpluses are positively correlated with each other based on the revelation principle~\cite{arnosti2016adverse}.

In SAGINs, users can obtain the services of a running LLM agent via a satellite or ground BS as their assistants to perform their local tasks. Each network operator $n \in \mathcal{N}$ has valuation $v_n = c_n m_n$ for the opportunity to serve the LLM agent, which is a production of the common value $c_n = \mathbb{E}_{t\in\mathcal{T}}[L^{total}_n - l^{acc}_n(\mathbf{a}^t, \mathbf{b}^t)]$ about physical resource consumption and the match performance gain $m_n = \mathbb{E}_{t\in\mathcal{T}}[\log(1/\beta_{i, m}^{\kappa^t_{n,i,m}})]$ obtained in Eq.~(\ref{eq:acc}). Specifically, the common value captures attributes of the resource consumption required to execute the LLMs, which depends on the communication, computing, and storage resources. Meanwhile, the match quality captures idiosyncratic components of cached models in network operators that affect the quality of LLM agents~\cite{arnosti2016adverse}. During the valuation of LLM agents, the common value is considered to be independent of match quality, i.e., the resource consumption of LLM agents is not relevant to the quality of LLM agents. We use $v_{(i)}$, $c_{(i)}$, and $m_{(i)}$ to denote the $i$-th highest valuation, common value, and match value factor, respectively.

In the market, a mechanism $\mathcal{M}(\mathbf{v}) = (\mathbf{z}, \mathbf{p})$ is required to map the privately held valuation $\mathbf{v}$ to allocation probabilities $\mathbf{z} = (z_0, z_1, \ldots, z_N)$ and payment $\mathbf{p} = (p_0, p_1, \ldots, p_N)$. The expected surplus, i.e., the realization of valuation, for satellite is $\mathbb{E}[v_0z_0(\mathbf{v})]$. Meanwhile, the surplus from the LLM agent allocated to the ground BSs is given by $\mathbb{E}[\sum_{n=1}^{N}v_nz_n(v)]$. To maximize the total surplus of network operators to provision LLM agents, the problem for the mechanism can be formulated as
\begin{maxi!}|s|[2]<b>
    {\mathcal{M}}{\mathbb{E}\left[ \sum_{n=0}^{N}v_iz_i(v)]\right]\label{eq:surplus}}{}{}
    \addConstraint{\sum_{t\in \mathcal{T}} \sum_{i\in\mathcal{I}} 
    \sum_{m\in\mathcal{M}} \frac{d_i R_{0,i,m}^t}{\mathbb{E}_{u\in\mathcal{U}_0}[r_{u,0} + r_0^C]}}{ \leq T_0^S\label{eq:con24}}{}{}
    \addConstraint{\sum_{n=0}^{N}z_{n}}{\leq 1\label{eq:con25}}{}{}{}
    \addConstraint{z_{n}}{\in \{0,1\}, \forall n \in \mathcal{N}\label{eq:con26}}{}{}.
\end{maxi!}
Constraint (\ref{eq:con24}) indicates that the provisioning time of satellites cannot exceed their coverage time. Constraints (\ref{eq:con25}) and (\ref{eq:con26}) indicate that there is one and only one network operator that can obtain the opportunity to run LLM agents.




\section{The Deep Q-Network-based Modified Second-bid Auction} \label{sec:auction}

\subsection{Modified Section-price Auction}

In the LLM agent market, all network operators submit their bids $\mathbf{x} = (x_0, x_1, \ldots, x_N)$ to the auctioneers. Then, the auctioneer leverages MSB auction~\cite{arnosti2016adverse} to determine the winning bidder and payment, which can be formulated as follows.

\begin{mechanism}[Modified Second-bid Auction]
The MSB auction allocates the LLM agent to the highest BS if their bid exceeds the second-highest bid by a price scaling factor of $\rho$ or more and prices the LLM agent with the second-highest bid scaling with $\rho$. When no performance bidders win, the opportunity is allocated to the relaying satellite, which offloads LLM agent services to cloud data centers, with the contracted price $x_0$ is chosen to maximize its expected profit as $x_0 = \max_{x} \mathbb{E}[(v_0 - v_{(1)} \textbf{1}(v_{(1)\leq x_0}))]$~\cite{arnosti2016adverse}. Formally, the allocation rule and the pricing rule can be represented as follows.
\begin{itemize}
    \item Allocation rule: For ground BSs, namely, the performance bidders, $n = 1, \ldots, N$, the allocation probabilities $z_n \in \{0,1\}$ for the deterministic mechanism are determined by
    \begin{equation}\label{eq:allocation}
        z_n(\mathbf{x}) = \mathbf{1}(x_n > \rho \max\{\mathbf{x}_{-n}\}),
    \end{equation}
    for $\rho \geq 1$. Then, the allocation probability for the satellite can be calculated based on the allocation probabilities of ground BSs, as $z_0(\mathbf{x}) \leq 1 - \sum_{n=1}^N z_n(\mathbf{x})$.
    \item Pricing rule: If the winner is ground BS $n = 1, \ldots, N$, the winning ground BS is charged with the product of the price scaling factor $\rho$ the second highest bid, i.e., the payment $p_n$ can be calculated as
    \begin{equation}\label{eq:payment}
        p_n(\mathbf{x}) = z_n(\mathbf{x}) \cdot \rho \max\{\mathbf{x}_{-n}\}.
    \end{equation}
    Furthermore, the payment of the satellite depends on their contract price $x_0$, i.e., $p_0(\mathbf{x}) = z_0(\mathbf{x}) x_0$.
\end{itemize}
    
\end{mechanism}

\subsection{DQN-based Price Scaling Factor}

To leverage DRL to determine the price scaling factor, we formulate the process of MSB as a Markov decision process (MDP), consisting of states, actions, and rewards. At each time step, the DRL-based auctioneer observes the current state and selects an action, i.e., the price scaling factor from the feasible action space. Then, the auctioneer determines the winning bidder and the payment, and receives the total surplus as the reward. Formally, the MDP of MBS can be formulated as follows. During the auction process, multiple network operators submit their bids to the auctioneer. Therefore, the state space consists of the bidding information in MSB, i.e., $S_k \triangleq \{\mathbf{x}^k\}$ at decision slot $k$. To calculate the pricing rule and allocation rule, the auctioneer needs to determine the level of the price scaling factor, i.e., $a_k \in \mathcal{A} \subseteq  \mathbb{N}$. Then, the price scaling factor is calculated as $\rho = 10^{a_k/|\mathcal{A}|}$, where $|\mathcal{A}|$ is the size of action space. The reward is calculated as the total surplus achieved in the market, i.e., $r_k = \mathbb{E}\left[ \sum_{n=0}^{N}v_iz_i(v)]\right]$ calculated in Eq.~(\ref{eq:surplus}).

Based on the MDP of MBS, the objective of the DRL-based auctioneer is to optimize a non-linear function approximation with parameters $\phi$ as a Q-network to maximize the future discounted return $R_k = \sum_{k'=k}^K \gamma^{k'-k}r_{k'}$. In this regard, the auctioneer needs to learn the optimal action-value function as
\begin{equation}
    Q^\star(S, a) = \mathbb{E}_{S'}\left[r+\gamma \max_{a'}Q^\star(S',a')|S,a\right],
\end{equation}
where the action $a'$ is selected to maximize the expected value of $r+\gamma Q^\star(S',a')$. Therefore, the target for iteration $k$ can be calculated as $
    y_k = r_k + \gamma \max_{a^{k+1}} Q_{\phi'}(S^{k+1}, a^{k+1}),$
where $\phi'$ is the parameters of target network $Q_{\phi'}$. To minimize the performance gap between the current Q value and the target, the loss function can be defined as
\begin{equation}\label{eq:loss}
    L = \frac{1}{K} \sum_{k=1}^K\left(y_k - Q_\phi(S_k,a_k)\right)^2.
\end{equation}
For computational efficiency and performance stability, DQN leverages stochastic gradient descent to optimize the parameters $\phi$ on the loss calculated in Eq. (\ref{eq:loss}) and update the target network $\phi'$ in each period of iteration. Finally, the DQMSB auction is provided in Algorithm~\ref{alg2}.

\begin{algorithm}[t]
    \caption{The DQMSB Auction}
    \label{alg2}
    {\small
    \algorithmicrequire{ Bids $\mathbf{x}$;}\\
    \algorithmicensure{ Allocation probabilities and payments;}\\
        Initialize Q-function parameters $\phi$, target Q-function parameters $\phi'$, and replay buffer $\mathcal{B}$;\\
    \For{episode in $1, \dots,T$}{
        \For{iteration $k$ in $1, \dots, K$}{
            Receive the bids $\mathbf{x}$ from bidders and observe the state $S_k$;\\
            Determine the price scaling factor $\rho = 10^{a_k/|\mathcal{A}|}$, following $a_k = \arg\max_{a}Q_\phi(S_k, a)$;\\
            Calculate the winning probabilities $\mathbf{z}$ and payments $\mathbf{p}$ obtained from the allocation rule in Eq.~(\ref{eq:allocation}) and the pricing rule in Eq.~(\ref{eq:payment});\\
            Observe the next states $S_{k+1}$ and reward $r_k$;\\
            Store transition $(S_k, a_k, r_k, S_{k+1})$ to $\mathcal{B}$;\\
            Sample a mini-batch of  experiences $(S_k, A_k, r_k, S_{k+1})$ from $\mathcal{B}$;\\
            Calculate $y_k = r_k + \gamma \max_{a^{k+1}}Q_{\phi'}(S_{k+1}, a_{k+1})$ using target network $Q_{\phi'}$;\\
            Update $\phi$ by performing gradient descent on the loss calculated in Eq.~(\ref{eq:loss});\\
            Update target network $\phi'$.\\
        }
    }
}
\end{algorithm}

\subsection{Property Analysis}

For an auction, strategy-proofness means that participants cannot achieve a higher utility by altering their honest bids. Adverse-selection-free means that the presence of market externalities and asymmetric information is unrelated to bidders' valuations. 
Therefore, it is important to note that the DQMSB auctions are fully strategy-proof and adverse-selection-free, as given in the following theorem.

\begin{theorem}
\label{the2}
    The DQMSB auction with the price scaling policy with fixed parameters $\bar{\phi}$ is anonymous, fully strategy-proof, and adverse-selection-free.
\end{theorem}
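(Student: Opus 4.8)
The plan is to establish each of the three claimed properties separately, showing that they hold for \emph{any} fixed price scaling factor $\rho \geq 1$ produced by the policy $\bar{\phi}$, which is the key observation: since $\bar{\phi}$ is fixed, the DQN output $\rho$ does not depend on the current bid profile $\mathbf{x}$ in a way the bidders can influence, so the auction reduces to a standard MSB auction with a parameter that is exogenous from each bidder's perspective. I would state this reduction first, because it is what lets me invoke the known incentive properties of the MSB auction from~\cite{arnosti2016adverse}.

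For \textbf{strategy-proofness}, I would fix an arbitrary ground BS $n$ and an arbitrary profile of the other bids $\mathbf{x}_{-n}$, and show that reporting $x_n = v_n$ is a dominant strategy. The argument is the classical second-price threshold argument: by the allocation rule~(\ref{eq:allocation}), BS $n$ wins iff $x_n > \rho\max\{\mathbf{x}_{-n}\}$ and, by the pricing rule~(\ref{eq:payment}), pays the bid-independent amount $\rho\max\{\mathbf{x}_{-n}\}$ upon winning. Hence the utility is $v_n - \rho\max\{\mathbf{x}_{-n}\}$ when winning and $0$ otherwise, and I would check the two deviation cases: when $v_n > \rho\max\{\mathbf{x}_{-n}\}$ any deviation can only lose a nonnegative surplus, and when $v_n \le \rho\max\{\mathbf{x}_{-n}\}$ any deviation can only win a nonpositive surplus. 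Because $\rho$ is determined by $\bar{\phi}$ and not by $x_n$, BS $n$ cannot manipulate the threshold, so truthful bidding is weakly dominant. For the satellite, its payment depends only on its precommitted contract price $x_0$, so the same logic applies. I would note that \emph{full} strategy-proofness (as opposed to Bayesian) follows precisely because the threshold and price are independent of bidder $n$'s own report.

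For \textbf{anonymity}, I would observe that the allocation and pricing rules depend on the bid profile only through the order statistics $\max\{\mathbf{x}_{-n}\}$ and the comparison $x_n > \rho\max\{\mathbf{x}_{-n}\}$, which are symmetric functions of the performance bidders' identities; relabeling bidders permutes the outcome correspondingly, so no bidder receives preferential treatment based on identity. For \textbf{adverse-selection-freeness}, I would invoke the structural result of~\cite{arnosti2016adverse}: with valuations decomposed as $v_n = c_n m_n$ into a common value and an idiosyncratic match factor, the MSB family with scaling $\rho \geq 1$ is designed so that the allocation decision is insulated from the correlated common-value component, thereby preventing the inefficiency that asymmetric information (the satellite's lack of cost/performance feedback) would otherwise induce. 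Concretely, I would show that the $\rho$-scaled threshold filters out allocations that are driven by common-value correlation rather than genuine match-quality advantage, so the presence of market externalities is unrelated to bidders' valuations in the sense required by the definition.

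The main obstacle I expect is the adverse-selection part, not the strategy-proofness part. Strategy-proofness and anonymity are essentially mechanical once the $\rho$-independence-of-own-bid observation is in place. The adverse-selection claim, however, rests on the economic structure of positively correlated common values and the specific role of $\rho$ in~\cite{arnosti2016adverse}; the delicate point is to verify that wrapping the DQN selection of $\rho$ around MSB does not break the adverse-selection-free guarantee. The cleanest way to discharge this is to argue that, \emph{for a fixed policy} $\bar{\phi}$, the realized $\rho$ is a constant of the mechanism and the DQMSB auction is outcome-equivalent to an MSB auction run at that $\rho$; the adverse-selection-free property then transfers verbatim from the MSB analysis, with the only thing to check being that $\rho \geq 1$ is preserved by the parametrization $\rho = 10^{a_k/|\mathcal{A}|}$ with $a_k \in \mathbb{N}$.
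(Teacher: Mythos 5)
There is a genuine gap, and it sits exactly at the step you call the ``key observation.'' You treat the fixed policy $\bar{\phi}$ as implying that the realized scaling factor $\rho$ is a constant of the mechanism, exogenous to the bids, so that DQMSB is outcome-equivalent to an ordinary MSB auction at that $\rho$. That premise is false for the mechanism as defined: in Algorithm~\ref{alg2} the state fed to the Q-network at decision slot $k$ is the bid profile itself, $S_k = \{\mathbf{x}^k\}$, so even with frozen parameters the realized factor $\rho = 10^{a_k/|\mathcal{A}|}$ with $a_k = \arg\max_a Q_{\bar{\phi}}(S_k,a)$ is a function of all submitted bids, including bidder $n$'s own report. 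Consequently the threshold $\rho\max\{\mathbf{x}_{-n}\}$ in the allocation rule~(\ref{eq:allocation}) and the payment in~(\ref{eq:payment}) are \emph{not} independent of $x_n$: by misreporting, a bidder can move $a_k$ and hence shrink the scaling factor applied to itself, which breaks the classical second-price dominance argument your strategy-proofness case analysis relies on, and with it the ``verbatim transfer'' of the adverse-selection-free property.

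The paper's proof closes precisely this hole by changing what the Q-network sees when bidder $n$'s critical payment is computed: it characterizes DQMSB by a critical payment function $\chi(\mathbf{x}_{-n};\bar{\phi}) = \rho^{a_k/|\mathcal{A}|}\max\{\mathbf{x}_{-n}\}$ in which $a_k = \arg\max_a Q(\mathbf{x}_{-n}\cup\{\mathbb{E}[x_n]\},a;\bar{\phi})$, i.e., bidder $n$'s actual bid is replaced by its expectation, making $\chi$ independent of $x_n$ by construction. With that in hand, the paper verifies the characterization conditions of~\cite{arnosti2016adverse}: the winner is the bidder exceeding its critical payment (strategy-proofness via the threshold structure), $\chi(\mathbf{x}_{-n};\bar{\phi}) \geq \max\{\mathbf{x}_{-n}\}$ since the scaling is at least one, $\chi$ depends on $\mathbf{x}_{-n}$ only through $\max\{\mathbf{x}_{-n}\}$, and --- the part your proposal leaves vague --- $\chi$ is \emph{homogeneous of degree one}, which is the precise property that delivers adverse-selection-freeness: it rules out a bidder coupling a multiplicative private signal $C\in\{1,c\}$ to the allocation outcome. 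Your adverse-selection paragraph appeals to MSB being ``insulated from the common-value component,'' but never identifies homogeneity of degree one as the condition to check, and with a bid-dependent $\rho$ this is not automatic (a Q-network is not scale-invariant in general). So both the incentive part and the adverse-selection part of your argument need the expected-bid substitution (or some equivalent device making $\rho$ independent of the bidder's own report) before they can be made rigorous.
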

\begin{proof}
    To prove that the proposed DQMSB auction is anonymous, fully strategy-proof, and adverse-selection-free, the auction should be characterized by a critical payment function $\chi$ conditioned on $\bar{\phi}$ such, for any competing bids $\mathbf{x}_{-n}$, ground BS bidder $n = 1,\ldots, N$ wins if and only if its bid exceeds the critical payment $\chi(\mathbf{x}_{-n}; \bar{\phi}) = \rho^{a_k/|\mathcal{A}|} \max\{\mathbf{x}_{-n}^k\}$, where $a_k = \arg\max_{a} Q(\mathbf{x}_{-n}^k\cup \{\mathbb{E}[x_n]\}, a;\bar{\phi})$. Then, when the ground BS bidder $n$  is conditional on winning, it needs to pay the critical payment $\chi(\mathbf{x}_{-n}; \bar{\phi})$. As $\rho^{a_k/|\mathcal{A}|} \geq 1$, only the ground BS with the highest bid bidder can win, which can satisfy the condition that $\chi(\mathbf{x}_{-n}; \bar{\phi}) \geq \max\{\mathbf{x}_{-n}\}$.
    In addition, the critical payment function of DQMSB $\chi(\mathbf{x}_{-n}; \bar{\phi}) = \rho^{a_k/|\mathcal{A}|} \max\{\mathbf{\mathbf{x}_{-n}^k}\}$ satisfies 
    \begin{equation}
        \begin{aligned}
            \chi(\max\{\mathbf{x}_{-n}\}; \bar{\phi}) &= \rho^{a_k/|\mathcal{A}|} \max\{\max\{\mathbf{x}_{-n}\}\}\\
            & = \rho^{a_k/|\mathcal{A}|} \max\{\mathbf{x}_{-n}\}\\
            & =\chi(\textbf{x}_{-n}; \bar{\phi}).
        \end{aligned}
    \end{equation}
    Therefore, considering there are two bidders in the market with one value higher than $\chi(\mathbf{x}_{-n}; \bar{\phi})$ and the other one value $\max\{\mathbf{x}_{-n}\}$, which will cause $\chi(\max\{\mathbf{x}_{-n}\}; \bar{\phi}) \neq \chi(\mathbf{x}_{-n}; \bar{\phi})$, the DQMSB auction cannot satisfy false-name proof. Specifically, when  $\chi(\mathbf{x}_{-n}; \bar{\phi}) < \chi(\max\{\mathbf{x}_{-n}\}; \bar{\phi})$, the first bidder can submit a lower bid while maintaining the other bids in the set of bids and thus the auction is not winner false-name proof. Otherwise, the auction is not loser false-name proof when $\chi(\mathbf{x}_{-n}; \bar{\phi}) > \chi(\max\{\mathbf{x_{-n}}\}; \bar{\phi})$, where the losing bidder in the market can submit a higher bid compared with the winner’s bid while maintaining the other bids in the set of bids $\mathbf{x}_{-n}$.
    Furthermore, the critical payment function $\chi$ of the DQMSB auction is homogeneous of degree one, which indicates that the auction is adverse selection-free. Suppose that $\chi$ is not homogeneous of degree one, a bidder could manipulate the system by adjusting their bid in response to their private information $C\in\{1, c\}$, i.e., $\chi(\textbf{m}_{-n};\bar{\phi}) < \chi(c\textbf{m}_{-n};\bar{\phi})/c$, where $c\in \mathbb{R}_+, n\geq 2$, and $\mathbf{x}_{-n} \in \mathbb{R}^{n-1}_+$ and $C=1$, $z_n(C\mathbf{m}) = z_n(\mathbf{m}) = \mathbf{1}_{\{m_n > \chi(m_{-n};\bar{\phi})\}} = 1$, so $z_0(C\mathbf{m})=0$. {\color{black}When $C\neq 1$, it indicates that the bidder can change its bid from $C = c$ to influence the probability of winning the auction, i.e., $z_n(C\mathbf{m}) = z_n(c\mathbf{m}) = \mathbf{1}(cm_n > \chi(cm_{-n};\bar{\phi})) = 0$. }However, the ground BS bidder $n$ with the highest bid cannot win the auction, and thus the satellite bidder $0$ wins the auction, i.e., $z_0(C\mathbf{m})=1$.

    Based on the above analysis, the proposed DQMSB can guarantee the anonymous, fully strategy-proof, and adverse-selection-free with the deterministic $Q_{\bar{\phi}}$.
\end{proof}

Theorem \ref{the2} describes MSB auctions as robust auctions that are anonymous, deterministic, not prone to adverse selection, and fully strategy-proof. It is important to note that the authors based on other DRL algorithms may possess any three of these characteristics. By allowing non-anonymity, the value of $\rho$ determined by DRL algorithms can vary depending on the bids of the current bidders.

\section{Experimental Results}\label{sec:results}

In this section, we validate the joint model caching and inference framework while evaluating the performance of the proposed least AoT cached model replacement algorithm and the DQMSB auction.

\subsection{Parameter Settings}

\begin{figure*}[t]
    \centering
    \subfigure[Average total cost versus time steps.]{\includegraphics[width=0.24\textwidth]{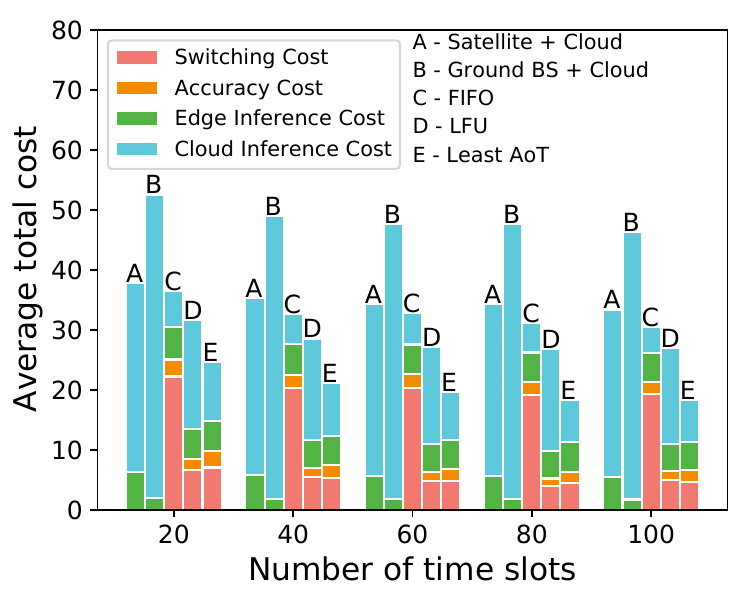}%
        \label{fig:SAGIN_time_step}}
    \subfigure[Average total cost versus number of services.]{\includegraphics[width=0.24\textwidth]{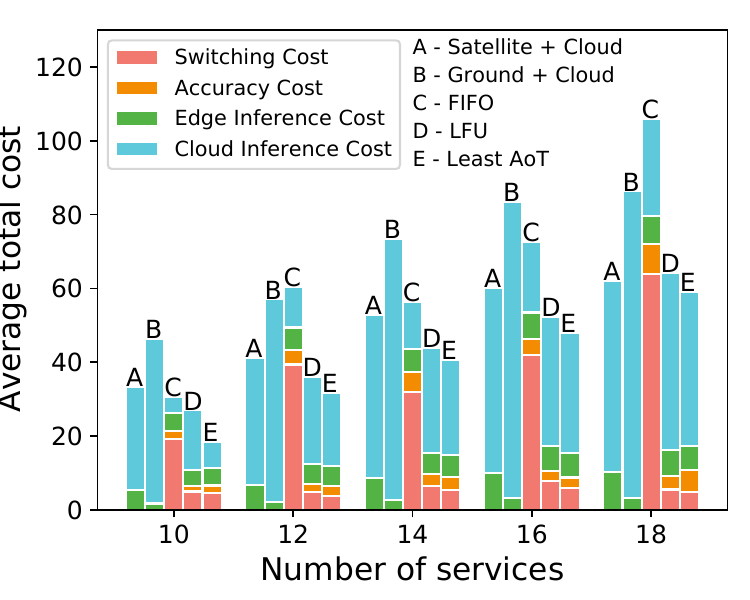}%
        \label{fig:SAGIN_services}}
    \subfigure[Average total cost versus number of GPUs.]{\includegraphics[width=0.24\textwidth]{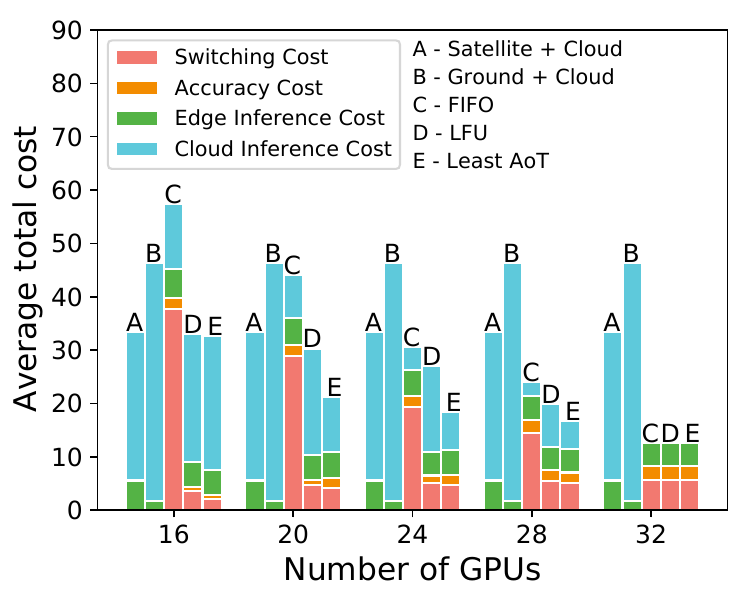}%
        \label{fig:SAGIN_GPUs}}
    \subfigure[Average total cost versus number of users.]{\includegraphics[width=0.24\textwidth]{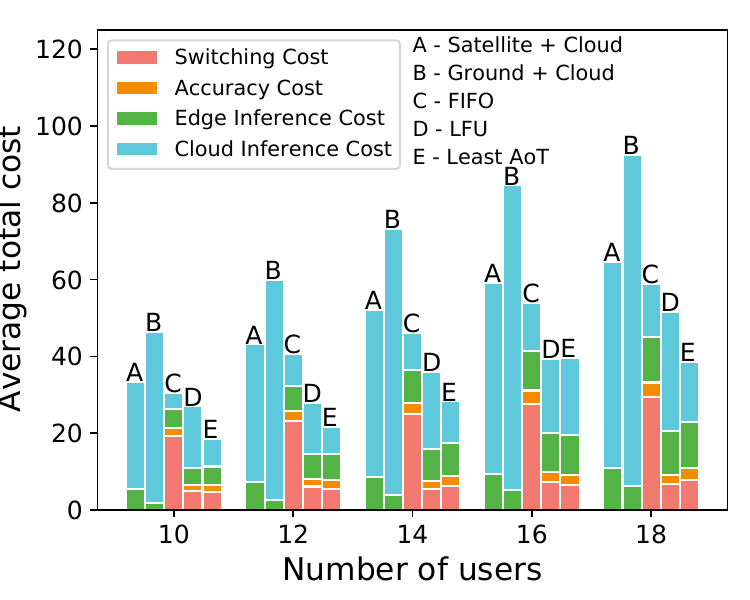}%
        \label{fig:SAGIN_users}}
    \caption{Performance of model caching algorithms under different system settings.}
    \label{fig:real_simulationrevenue}
\end{figure*}

We consider the SAGINs with one satellite and multiple ground BSs to provision LLM agent services to users. We evaluate the proposed algorithm within 100 time steps. For each GPU in the edge servers of ground BSs, the memory is 80 GB, energy efficiency is 810 GFLOPS/W, and energy capacity is 300 W. The default number of services is set to 10, the default number of GPUs at ground BSs is set to 24, and the default number of users is set to 10. We leverage ImageBind~\cite{girdhar2023imagebind} as the multimodal perception module of LLM agents, whose performance is 77.7\% for images, 50.0\% for videos, 63.4\% for infrared, 54.0\% for depth, 66.9\% for audio, 25.0\% for IMU. 
We consider two types of LLMs for performing CoT reasoning, including the LLAMA-65B and GPT3-174B, whose context windows are 2k and 8k tokens, respectively. The total number of reasoning LLMs is set to 10. The default context vanishing factor $\Delta_n^t$ is set to 0.6. For each CoT reasoning example, the maximum size is set to 200 tokens. The transmit power of users is set to 0.2 W and the allocated bandwidth is set to 20 MHz. The size of input data of LLM agent services is uniformly selected from [100, 200] MB. The quantity of LLM service requests is generated from the Poisson point process depending on the number of users. Due to the ratio of existing ground BS and communication satellites, the edge access cost for satellites is set to 0.005 and 0.0001 for ground BSs. The cloud access cost is set to 0.04 for ground BSs and 0.025 for satellites. For the LLM agent service market, the number of BSs is set to 5 by default. The satellite-related settings follow~\cite{tang2021computation} and the DRL-related parameter settings follow~\cite{weng2022tianshou}.

\subsection{Performance Evaluation of the Least AoT Algorithm}


\begin{figure}[t]
\vspace{-0.3cm}
    \centering
    \subfigure[Accuracy cost versus vanishing factor.]{\includegraphics[width=0.49\linewidth]{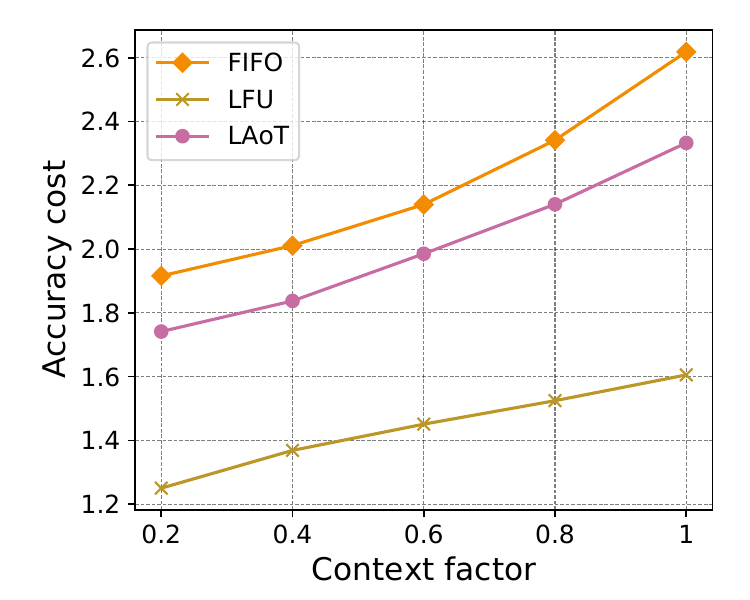}%
        \label{fig:accuracycost}}
    \subfigure[Performance gain versus vanishing factor.]{\includegraphics[width=0.49\linewidth]{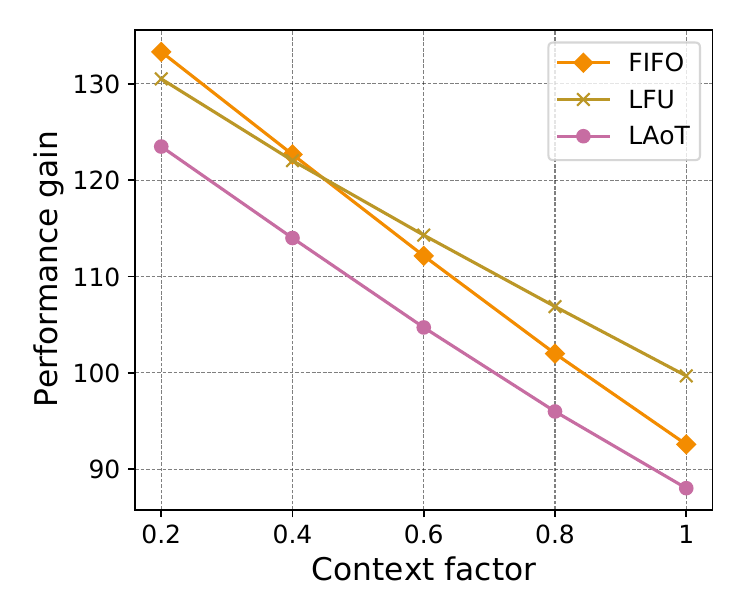}%
        \label{fig:performancegain}}
    \caption{Model performance under different vanishing factors.}
    \label{fig:real_simulationrevenue}
\end{figure}
During the performance evaluation of the proposed model caching algorithm, we leverage several traditional caching baselines for comparison, including first-in-first-out (FIFO) and least frequently used (LFU) algorithms.
For instance, Fig~\ref{fig:SAGIN_time_step} presents the average cost of different caching and offloading strategies (A-E) over a series of time slots. The costs are divided into four categories: switching cost, accuracy cost, edge inference cost, and cloud inference cost. For all strategies, there is a clear trend of decreasing average cost as the number of time slots increases. This indicates that over time, the systems may become more efficient, possibly due to improved caching optimization or better distribution of LLM agent tasks between the edge and the cloud. 
In addition, Fig.~\ref{fig:SAGIN_services} demonstrates the increase of the average total cost for different caching and offloading schemes as the number of services increases. This suggests that as the system has to handle a larger variety of services, the associated costs rise, possibly due to increased complexity and demand for resources. The least AoT algorithm demonstrates a consistent performance advantage, maintaining the lowest average total cost across different numbers of services. 

Furthermore, Fig. \ref{fig:SAGIN_GPUs} illustrates how the average total cost changes for various caching and offloading schemes with the number of GPUs utilized. As we can observe, more available GPUs tend to lower the cost, likely due to the improved computational efficiency and reduced processing time at the edge.  The relative stable switching and accuracy costs across different GPU counts suggest these costs are more dependent on the efficiency of the algorithm itself rather than on hardware resources. Meanwhile, edge inference cost reductions point to the benefits of local processing power, highlighting the importance of edge capabilities in managing LLM caching. Overall, the average total cost decreases for most schemes as GPU resources increase. 
Finally, Fig. \ref{fig:SAGIN_users} demonstrates an upward trend in average total cost for all schemes as the number of users increases. This suggests that the system's costs escalate with the growing user base, likely due to increased demand for LLM services, which intensifies the load on caching and computation resources. The increase in average total cost across all schemes with more users suggests that user demand has a direct impact on the system's resource utilization and cost efficiency. The least AoT algorithm demonstrates scalability by maintaining the lowest increase in cost, indicating its potential for cost-effective expansion as user numbers grow. The relative stability of the switching cost across varying user counts may imply that the action of switching between cached LLMs does not contribute significantly to cost variations. 

\begin{figure}[t]
\vspace{-0.3cm}
    \centering
    \includegraphics[width=0.865\linewidth]{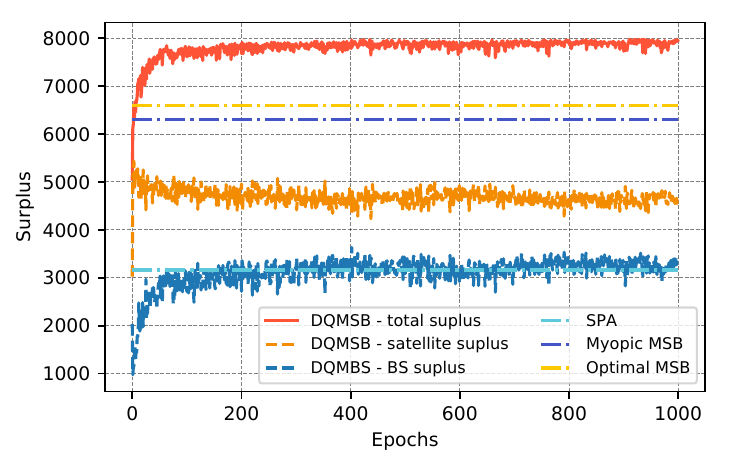}
    \caption{The convergence of the proposed DQMSB auction.}
    \label{fig:convergence}
\end{figure}

\begin{figure*}[t]
    \centering
    \subfigure[Total surplus versus number of ground BSs.]{\includegraphics[width=0.24\textwidth]{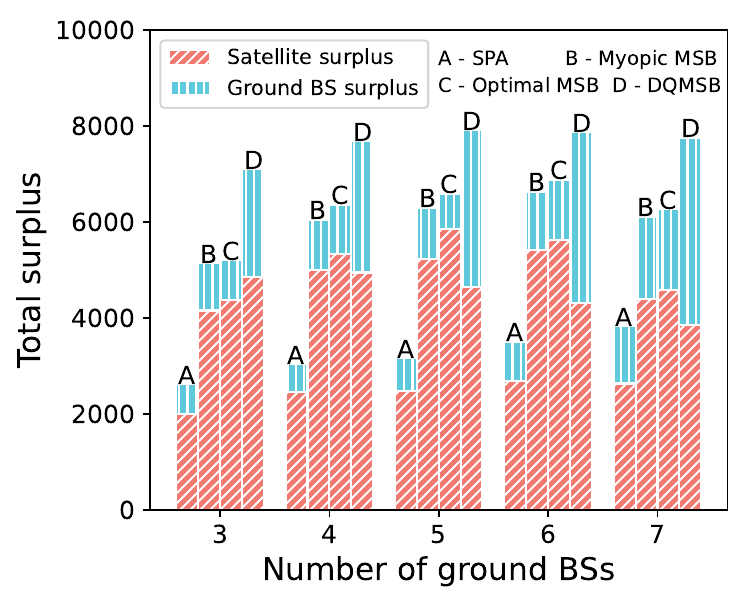}%
        \label{fig:auctionBS}}
    \subfigure[Total surplus versus number of services.]{\includegraphics[width=0.24\textwidth]{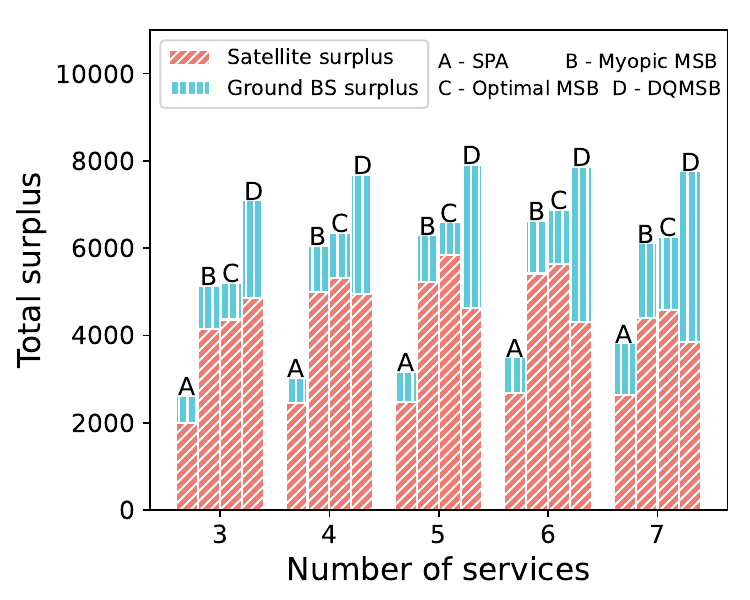}%
        \label{fig:auctionservices}}
    \subfigure[Total surplus versus number of GPUs.]{\includegraphics[width=0.24\textwidth]{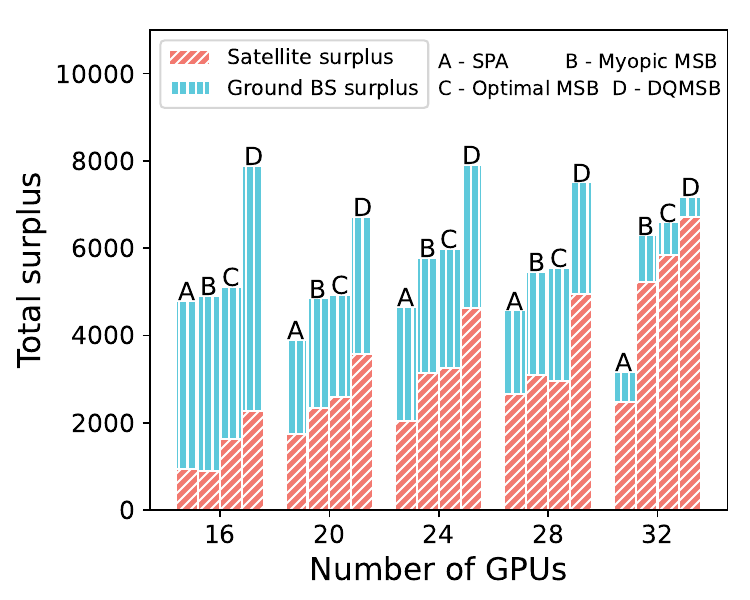}%
        \label{fig:auctionGPUs}}
    \subfigure[Total surplus versus number of users.]{\includegraphics[width=0.24\textwidth]{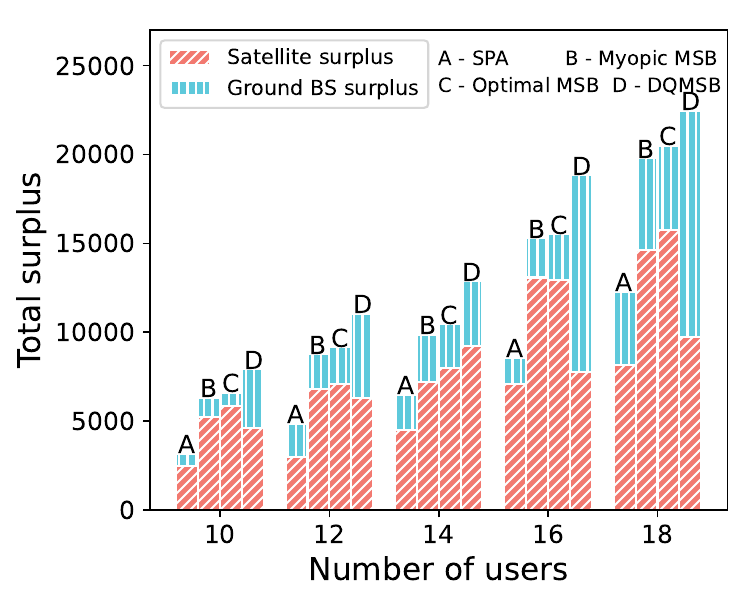}%
        \label{fig:auctionusers}}
    \caption{Performance of the proposed DQMSB auction under different system settings.}
    \label{fig:auction}
\end{figure*}

As shown in Fig.~\ref{fig:accuracycost}, as the context factor increases, the cost of accuracy also increases for all three algorithms (FIFO, LFU, and least AoT). This indicates that when the context becomes more influential, which can be seen as the importance of cache relevance, it becomes more challenging and expensive for all methods to maintain high accuracy in caching decisions. Meanwhile, in Fig.~\ref{fig:performancegain}, there is a decrease in performance gain for all algorithms as the context factor increases. This suggests that as the importance of the context in caching decisions increases, the performance gain of these algorithms decreases. The overall decrease in performance gain with a higher context factor suggests that as the relevance of the context increases, the potential for any algorithm to outperform basic caching strategies diminishes. This might be because the decision-making process becomes more complex and the benefits of sophisticated strategies are less pronounced. 

\subsection{Convergence Analysis}
Initially, we demonstrate the convergence performance of the DQMSB auction in Fig.~\ref{fig:convergence}. At the beginning of training, the total surplus achieved by the DQMSB auction starts with a sharp increase and then levels off, indicating that the mechanism quickly learns an effective strategy for maximizing surplus and then converges to a stable solution. At around 200 epochs, the total surplus stabilizes at a high level. Although there are minor fluctuations following this rise, the surplus remains relatively consistent, indicating that the system has reached a convergence in its learning phase. Interestingly, the performance of DQMBS in ground BS surplus can achieve similar performance with the SPA, at around 3,000. This indicates that the SPA can realize the surplus of satellites, which can reach around 4,500 for the optimal solution. Furthermore, the total surplus achieved by the DQMSB auction can outperform the MSB auction by around 20\%. The convergence performance of the DQMSB auction can be considered constantly robust, as it achieves and maintains a higher total surplus compared to the benchmarks. 

\subsection{Performance Evaluation for the DQMSB Auction}

To evaluate the performance of the DQMSB auction, we leverage several auction baselines, including second-price auction, myopic MSB, and optimal MSB. Particularly, the price scaling factor of myopic MSB is set as $\rho = \max(1, x_0/x_{(2)})$ with current round information and the price scaling factor of optimal MSB can be set as $\rho = \max(1, \mathbb{E}[x_0]/\mathbb{E}[x_{(2)}])$ with historical statistic information. Under different numbers of bidders, Fig. ~\ref{fig:auctionBS} demonstrates the total surplus achieved by various auction mechanisms based on the number of ground BSs. Across all auction mechanisms, the total surplus increases as the number of ground BSs grows from 3 to 7. This suggests that a larger number of ground BSs enables better service provision and coverage, resulting in a higher overall value generated by the auctions. The DQMSB auction consistently yields the highest total surplus, regardless of the number of ground BSs. This indicates that the DQMSB auction is more efficient in resource allocation and surplus generation compared to other auction types. The Myopic MSB and Optimal MSB auctions outperform SPA but fall short of the performance achieved by the DQMSB auction. 
As we can observe from Fig. \ref{fig:auctionservices}, the total surplus does not monotonically increase or decrease with the number of services. For instance, there is a drop in total surplus for all mechanisms when varying from 10 to 12 services, followed by an increase at 14 services, another decrease at 16, and an increase again at 18. In most cases, the surplus from ground BS services is greater than the surplus from satellite services. 
This non-linear relationship implies that simply increasing the number of services cannot guarantee a higher surplus. 

In Fig.~\ref{fig:auctionGPUs}, there appears to be a general increase in the total surplus for all the auctions as the number of GPUs increases. The surplus for the other mechanisms also tends to increase, although not as consistently or significantly as the DQMSB auction. The ground BS surplus dominates the total surplus for all auction mechanisms and quantities of GPUs. However, as the number of GPUs increases, the satellite surplus also increases, suggesting a positive relationship between computing resources and the ability to generate surplus in satellite-based services. The increasing trend of total surplus with more GPUs implies that having more computing resources allows the auction mechanisms, particularly DQMSB, to perform better. 
Finally, Fig.~\ref{fig:auctionusers} demonstrates an upward trend in the total surplus with an increasing number of users for all auction mechanisms, which suggests that more users contribute to a higher valuation and competition, thus increasing the total surplus. When the number of users increases, the DQMSB auction can yield a larger surplus from ground BSs, thereby limited computing and communication resources can be allocated effectively to maximize the total surplus.


\section{Conclusions}\label{sec:conclusions}

In this paper, we proposed a joint caching and inference framework for provisioning ubiquitous edge intelligence services in SAGINs. In SAGINs, satellites and ground BSs were utilized to provision global LLM agent services with edge servers at ground BSs or remote cloud data centers. Specifically, considering the unique few-shot learning capabilities of LLMs and new constraints on the size of context windows, we introduced a new concept, i.e., the cached model as a resource, beyond conventional communication, computing, and storage resources. For allocating cached model resources, we designed a new metric, namely, age of thought, to evaluate the relevance and consistency of thoughts/CoT examples in context windows during inferences and proposed the least AoT algorithm. Finally, we proposed the DQMSB for incentivizing network operators to provision LLM agents with high market efficiency through the DQN-based price scaling factor. Theoretically, we proved that the proposed DQMSB auction is anonymous, fully strategy-proof, and adverse-selection-free. 
\bibliographystyle{IEEEtran}
\bibliography{main}

\begin{thebibliography}{10}
\providecommand{\url}[1]{#1}
\csname url@samestyle\endcsname
\providecommand{\newblock}{\relax}
\providecommand{\bibinfo}[2]{#2}
\providecommand{\BIBentrySTDinterwordspacing}{\spaceskip=0pt\relax}
\providecommand{\BIBentryALTinterwordstretchfactor}{4}
\providecommand{\BIBentryALTinterwordspacing}{\spaceskip=\fontdimen2\font plus
\BIBentryALTinterwordstretchfactor\fontdimen3\font minus \fontdimen4\font\relax}
\providecommand{\BIBforeignlanguage}[2]{{%
\expandafter\ifx\csname l@#1\endcsname\relax
\typeout{** WARNING: IEEEtran.bst: No hyphenation pattern has been}%
\typeout{** loaded for the language `#1'. Using the pattern for}%
\typeout{** the default language instead.}%
\else
\language=\csname l@#1\endcsname
\fi
#2}}
\providecommand{\BIBdecl}{\relax}
\BIBdecl

\bibitem{liu2018space}
J.~Liu, Y.~Shi, Z.~M. Fadlullah, and N.~Kato, ``Space-air-ground integrated network: A survey,'' \emph{IEEE Commun. Surv. Tutor.}, vol.~20, no.~4, pp. 2714--2741, May 2018.

\bibitem{tang2021computation}
Q.~Tang, Z.~Fei, B.~Li, and Z.~Han, ``Computation offloading in leo satellite networks with hybrid cloud and edge computing,'' \emph{IEEE Internet Things J.}, vol.~8, no.~11, pp. 9164--9176, Feb. 2021.

\bibitem{wei2021hybrid}
T.~Wei, W.~Feng, Y.~Chen, C.-X. Wang, N.~Ge, and J.~Lu, ``Hybrid satellite-terrestrial communication networks for the maritime internet of things: Key technologies, opportunities, and challenges,'' \emph{IEEE Internet Things J.}, vol.~8, no.~11, pp. 8910--8934, Feb. 2021.

\bibitem{min2023recent}
B.~Min, H.~Ross, E.~Sulem, A.~P.~B. Veyseh, T.~H. Nguyen, O.~Sainz, E.~Agirre, I.~Heintz, and D.~Roth, ``Recent advances in natural language processing via large pre-trained language models: A survey,'' \emph{ACM Comput. Surv.}, vol.~56, no.~2, pp. 1--40, Sep. 2023.

\bibitem{xu2023sparks}
M.~Xu, D.~Niyato, H.~Zhang, J.~Kang, Z.~Xiong, S.~Mao, and Z.~Han, ``Sparks of generative pretrained transformers in edge intelligence for the metaverse: Caching and inference for mobile artificial intelligence-generated content services,'' \emph{IEEE Veh. Technol. Mag.}, vol.~18, no.~4, pp. 35--44, Dec. 2023.

\bibitem{xu2024large}
M.~Xu, N.~Dusit, J.~Kang, Z.~Xiong, S.~Mao, Z.~Han, D.~I. Kim, and K.~B. Letaief, ``When large language model agents meet 6g networks: Perception, grounding, and alignment,'' \emph{arXiv preprint arXiv:2401.07764}, 2024.

\bibitem{xi2023rise}
Z.~Xi, W.~Chen, X.~Guo, W.~He, Y.~Ding, B.~Hong, M.~Zhang, J.~Wang, S.~Jin, E.~Zhou \emph{et~al.}, ``The rise and potential of large language model based agents: A survey,'' \emph{arXiv preprint arXiv:2309.07864}, 2023.

\bibitem{brown2020language}
T.~Brown, B.~Mann, N.~Ryder, M.~Subbiah, J.~D. Kaplan, P.~Dhariwal, A.~Neelakantan, P.~Shyam, G.~Sastry, A.~Askell \emph{et~al.}, ``Language models are few-shot learners,'' \emph{Proceeding of Advances in Neural Information Processing Systems}, vol.~33, pp. 1877--1901, Dec. 2020.

\bibitem{yang2024llm}
H.~Yang, M.~Siew, and C.~Joe-Wong, ``An llm-based digital twin for optimizing human-in-the loop systems,'' \emph{arXiv preprint arXiv:2403.16809}, 2024.

\bibitem{jiang2023approaching}
S.~Jiang and J.~Wu, ``Approaching an optimal bitcoin mining overlay,'' \emph{IEEE/ACM Trans. Netw.}, vol.~31, no.~5, pp. 2013--2026, Oct. 2023.

\bibitem{hong2022protecting}
S.~Hong, L.~Duan, and J.~Huang, ``Protecting location privacy by multiquery: A dynamic bayesian game theoretic approach,'' \emph{IEEE Trans. Inf. Forensics}, vol.~17, pp. 2569--2584, 2022.

\bibitem{xu2024unleashing}
M.~Xu, H.~Du, D.~Niyato, J.~Kang, Z.~Xiong, S.~Mao, Z.~Han, A.~Jamalipour, D.~I. Kim, X.~Shen \emph{et~al.}, ``Unleashing the power of edge-cloud generative {AI} in mobile networks: A survey of {AIGC} services,'' \emph{IEEE Commun. Surv. Tutor.}, pp. 1--1, 2024.

\bibitem{zhou2019edge}
Z.~Zhou, X.~Chen, E.~Li, L.~Zeng, K.~Luo, and J.~Zhang, ``Edge intelligence: Paving the last mile of artificial intelligence with edge computing,'' \emph{Proc. IEEE}, vol. 107, no.~8, pp. 1738--1762, Jun. 2019.

\bibitem{packer2023memgpt}
C.~Packer, V.~Fang, S.~G. Patil, K.~Lin, S.~Wooders, and J.~E. Gonzalez, ``Memgpt: Towards llms as operating systems,'' \emph{arXiv preprint arXiv:2310.08560}, 2023.

\bibitem{yang2023longqlora}
J.~Yang, ``Longqlora: Efficient and effective method to extend context length of large language models,'' \emph{arXiv preprint arXiv:2311.04879}, 2023.

\bibitem{arnosti2016adverse}
N.~Arnosti, M.~Beck, and P.~Milgrom, ``Adverse selection and auction design for internet display advertising,'' \emph{Am. Econ. Rev.}, vol. 106, no.~10, pp. 2852--2866, Oct. 2016.

\bibitem{mnih2015human}
V.~Mnih, K.~Kavukcuoglu, D.~Silver, A.~A. Rusu, J.~Veness, M.~G. Bellemare, A.~Graves, M.~Riedmiller, A.~K. Fidjeland, G.~Ostrovski \emph{et~al.}, ``Human-level control through deep reinforcement learning,'' \emph{Nature}, vol. 518, no. 7540, pp. 529--533, Feb. 2015.

\bibitem{xu2023ai}
X.~Xu, Q.~Wang, Y.~Hou, and S.~Wang, ``{AI-SPACE}: A cloud-edge aggregated artificial intelligent architecture for tiansuan constellation-assisted space-terrestrial integrated networks,'' \emph{IEEE Netw.}, vol.~37, no.~2, pp. 22--28, Sep. 2023.

\bibitem{hou2022edge}
X.~Hou, J.~Wang, Z.~Fang, Y.~Ren, K.-C. Chen, and L.~Hanzo, ``Edge intelligence for mission-critical 6g services in space-air-ground integrated networks,'' \emph{IEEE Netw.}, vol.~36, no.~2, pp. 181--189, Apr. 2022.

\bibitem{qin2022content}
P.~Qin, M.~Wang, X.~Zhao, and S.~Geng, ``Content service oriented resource allocation for space--air--ground integrated 6g networks: A three-sided cyclic matching approach,'' \emph{IEEE Internet Things J.}, vol.~10, no.~1, pp. 828--839, 2022.

\bibitem{chen2023big}
Z.~Chen, Z.~Zhang, and Z.~Yang, ``Big {AI }models for 6{G} wireless networks: Opportunities, challenges, and research directions,'' \emph{arXiv preprint arXiv:2308.06250}, 2023.

\bibitem{shen2024large}
Y.~Shen, J.~Shao, X.~Zhang, Z.~Lin, H.~Pan, D.~Li, J.~Zhang, and K.~B. Letaief, ``Large language models empowered autonomous edge {AI} for connected intelligence,'' \emph{IEEE Commun. Mag.}, pp. 1--7, Jan. 2024.

\bibitem{du2023power}
Y.~Du, S.~C. Liew, K.~Chen, and Y.~Shao, ``The power of large language models for wireless communication system development: A case study on {FPGA} platforms,'' \emph{arXiv preprint arXiv:2307.07319}, 2023.

\bibitem{cui2023llmind}
H.~Cui, Y.~Du, Q.~Yang, Y.~Shao, and S.~C. Liew, ``Llmind: Orchestrating ai and iot with llms for complex task execution,'' \emph{arXiv preprint arXiv:2312.09007}, 2023.

\bibitem{jiang2023large}
F.~Jiang, L.~Dong, Y.~Peng, K.~Wang, K.~Yang, C.~Pan, D.~Niyato, and O.~A. Dobre, ``Large language model enhanced multi-agent systems for 6g communications,'' \emph{arXiv preprint arXiv:2312.07850}, 2023.

\bibitem{dong2023lambo}
L.~Dong, F.~Jiang, Y.~Peng, K.~Wang, K.~Yang, C.~Pan, and R.~Schober, ``Lambo: Large language model empowered edge intelligence,'' \emph{arXiv preprint arXiv:2308.15078}, 2023.

\bibitem{lin2023pushing}
Z.~Lin, G.~Qu, Q.~Chen, X.~Chen, Z.~Chen, and K.~Huang, ``Pushing large language models to the 6g edge: Vision, challenges, and opportunities,'' \emph{arXiv preprint arXiv:2309.16739}, 2023.

\bibitem{zhang2021privacy}
M.~Zhang, L.~Yang, S.~He, M.~Li, and J.~Zhang, ``Privacy-preserving data aggregation for mobile crowdsensing with externality: An auction approach,'' \emph{IEEE/ACM Trans. Netw.}, vol.~29, no.~3, pp. 1046--1059, 2021.

\bibitem{niyato2020auction}
D.~Niyato, N.~C. Luong, P.~Wang, and Z.~Han, \emph{Auction Theory for Computer Networks}.\hskip 1em plus 0.5em minus 0.4em\relax Cambridge University Press, 2020.

\bibitem{du2018auction}
J.~Du, C.~Jiang, H.~Zhang, Y.~Ren, and M.~Guizani, ``Auction design and analysis for sdn-based traffic offloading in hybrid satellite-terrestrial networks,'' \emph{IEEE J. Sel. Areas Commun.}, vol.~36, no.~10, pp. 2202--2217, Sep. 2018.

\bibitem{chen2020service}
Q.~Chen, W.~Meng, S.~Han, and C.~Li, ``Service-oriented fair resource allocation and auction for civil aircrafts augmented space-air-ground integrated networks,'' \emph{IEEE Trans. Veh. Technol.}, vol.~69, no.~11, pp. 13\,658--13\,672, Sep. 2020.

\bibitem{yang2023lightweight}
N.~Yang, D.~Guo, Y.~Jiao, G.~Ding, and T.~Qu, ``Lightweight blockchain-based secure spectrum sharing in space-air-ground integrated iot network,'' \emph{IEEE Internet Things J.}, vol.~10, no.~23, pp. 20\,511--20\,527, Jun. 2023.

\bibitem{deng2020ultra}
R.~Deng, B.~Di, S.~Chen, S.~Sun, and L.~Song, ``Ultra-dense leo satellite offloading for terrestrial networks: How much to pay the satellite operator?'' \emph{IEEE Trans. Wirel. Commun.}, vol.~19, no.~10, pp. 6240--6254, Jun. 2020.

\bibitem{cheng2019space}
N.~Cheng, F.~Lyu, W.~Quan, C.~Zhou, H.~He, W.~Shi, and X.~Shen, ``Space/aerial-assisted computing offloading for iot applications: A learning-based approach,'' \emph{IEEE J. Sel. Areas Commun.}, vol.~37, no.~5, pp. 1117--1129, Mar. 2019.

\bibitem{zhang2023igniting}
Z.~Zhang, Y.~Yao, A.~Zhang, X.~Tang, X.~Ma, Z.~He, Y.~Wang, M.~Gerstein, R.~Wang, G.~Liu \emph{et~al.}, ``Igniting language intelligence: The hitchhiker's guide from chain-of-thought reasoning to language agents,'' \emph{arXiv preprint arXiv:2311.11797}, 2023.

\bibitem{jiang2023latent}
H.~Jiang, ``A latent space theory for emergent abilities in large language models,'' \emph{arXiv preprint arXiv:2304.09960}, 2023.

\bibitem{tutunov2023can}
R.~Tutunov, A.~Grosnit, J.~Ziomek, J.~Wang, and H.~Bou-Ammar, ``Why can large language models generate correct chain-of-thoughts?'' \emph{arXiv preprint arXiv:2310.13571}, 2023.

\bibitem{dai2023can}
D.~Dai, Y.~Sun, L.~Dong, Y.~Hao, S.~Ma, Z.~Sui, and F.~Wei, ``Why can gpt learn in-context? language models secretly perform gradient descent as meta-optimizers,'' in \emph{Findings of the Association for Computational Linguistics}, Toronto, Canada, Jul. 2023, pp. 4005--4019.

\bibitem{zhao2022edgeadaptor}
K.~Zhao, Z.~Zhou, X.~Chen, R.~Zhou, X.~Zhang, S.~Yu, and D.~Wu, ``Edgeadaptor: Online configuration adaption, model selection and resource provisioning for edge dnn inference serving at scale,'' \emph{IEEE Trans. Mob. Comput.}, pp. 1 -- 16, Jul. 2022.

\bibitem{girdhar2023imagebind}
R.~Girdhar, A.~El-Nouby, Z.~Liu, M.~Singh, K.~V. Alwala, A.~Joulin, and I.~Misra, ``Imagebind: One embedding space to bind them all,'' in \emph{Proc. of IEEE/CVF Conference on Computer Vision and Pattern Recognition}, Los Alamitos, CA, Jun. 2023, pp. 15\,180--15\,190.

\bibitem{weng2022tianshou}
J.~Weng, H.~Chen, D.~Yan, K.~You, A.~Duburcq, M.~Zhang, Y.~Su, H.~Su, and J.~Zhu, ``Tianshou: A highly modularized deep reinforcement learning library,'' \emph{J. Mach. Learn.}, vol.~23, no.~1, pp. 12\,275--12\,280, Jan. 2022.

\end{thebibliography}

\end{document}